\newcommand{\bra}[1]{\langle #1|}
\newcommand{\ket}[1]{|#1\rangle}
\newcommand{\cent}[0]{\mbox{\textcent}}
\newcommand{\dollar}[0]{\$}
\newtheorem{lemma}{Lemma}
\newtheorem{theorem}{Theorem}
\newtheorem{corollary}{Corollary}
\newtheorem{openproblem}{Open Problem}
\newtheorem{fact}{Fact}
\newcommand{\setD}{<\mspace{-4mu}>}
\newcommand{\rev}[0]{\mathrm{r}}
\newcommand{\LRofC}[1]{\left\langle #1 \right\rangle }
\title{Quantum computation with devices whose contents are never read\thanks{
A preliminary version of this work was presented in the 9$ ^{th} $ International Conference on 
Unconventional Computation (UC2010).}}
\author{Abuzer Yakary{\i}lmaz\thanks{Corresponding author}\\
\small Bo\u{g}azi\c{c}i University, Department of Computer Engineering,\\ [-0.8ex]
\small Bebek 34342 \.{I}stanbul, Turkey \\ [-0.8ex]
\small \texttt{abuzer@boun.edu.tr}
\and
R\= usi\c n\v s Freivalds\\
\small Institute of Mathematics and Computer Science, University of Latvia,\\[-0.8ex]
\small Rai\c na bulv\= aris 29, Riga, LV-1459, Latvia\\[-0.8ex]
\small \texttt{Rusins.Freivalds@mii.lu.lv}
\and
A. C. Cem Say\\
\small Bo\u{g}azi\c{c}i University, Department of Computer Engineering,\\ [-0.8ex]
\small Bebek 34342 \.{I}stanbul, Turkey \\ [-0.8ex]
\small \texttt{say@boun.edu.tr}
\and
Ruben Agadzanyan\\
\small Institute of Mathematics and Computer Science, University of Latvia,\\[-0.8ex]
\small Rai\c na bulv\= aris 29, Riga, LV-1459, Latvia\\[-0.8ex]
\small \texttt{agrubi@gmail.com}
}
\date{\small Keywords: quantum computation, write-only memory, quantum finite automata, counter automata, quantum Fourier transform}
\begin{document}

\maketitle

\pagenumbering{arabic}

\thispagestyle{empty}

% AAAAAAAAAAAAAAAAAAAAAAAAAAAAAAAAAAAAAAAAAAAAAAAAAAAAAAAAAAAAAAAAAAAAAAAAAAAAAAAA % 
% AAAAAAAAAAAAAAAAAAAAAAAAAAAAAAAAAAAAAAAAAAAAAAAAAAAAAAAAAAAAAAAAAAAAAAAAAAAAAAAA % 
\begin{abstract}
In classical computation, a ``write-only memory" (WOM) is little more
than an oxymoron, and the addition of WOM to a (deterministic or
probabilistic) classical computer brings no advantage. We prove that quantum computers that are augmented with WOM can solve problems that
neither a classical computer with WOM nor a quantum computer without
WOM can solve, when all other resource bounds are equal. We focus on realtime quantum finite automata, and examine the increase in their power effected by the addition of WOMs with different access modes and capacities. Some problems that are unsolvable by two-way probabilistic Turing machines using sublogarithmic amounts of read/write memory are shown to be solvable by these enhanced automata.
\end{abstract}
% AAAAAAAAAAAAAAAAAAAAAAAAAAAAAAAAAAAAAAAAAAAAAAAAAAAAAAAAAAAAAAAAAAAAAAAAAAAAAAAA % 
% AAAAAAAAAAAAAAAAAAAAAAAAAAAAAAAAAAAAAAAAAAAAAAAAAAAAAAAAAAAAAAAAAAAAAAAAAAAAAAAA % 

% SSSSSSSSSSSSSSSSSSSSSSSSSSSSSSSSSSSSSSSSSSSSSSSSSSSSSSSSSSSSSSSSSSSSSSSSSSSSSSSS %
% SSSSSSSSSSSSSSSSSSSSSSSSSSSSSSSSSSSSSSSSSSSSSSSSSSSSSSSSSSSSSSSSSSSSSSSSSSSSSSSS %
% SSSSSSSSSSSSSSSSSSSSSSSSSSSSSSSSSSSSSSSSSSSSSSSSSSSSSSSSSSSSSSSSSSSSSSSSSSSSSSSS %
\section*{Introduction}
% SSSSSSSSSSSSSSSSSSSSSSSSSSSSSSSSSSSSSSSSSSSSSSSSSSSSSSSSSSSSSSSSSSSSSSSSSSSSSSSS %
% SSSSSSSSSSSSSSSSSSSSSSSSSSSSSSSSSSSSSSSSSSSSSSSSSSSSSSSSSSSSSSSSSSSSSSSSSSSSSSSS %
% SSSSSSSSSSSSSSSSSSSSSSSSSSSSSSSSSSSSSSSSSSSSSSSSSSSSSSSSSSSSSSSSSSSSSSSSSSSSSSSS %

It is well known that many physical processes that violate human
``common sense" are in fact sanctioned by quantum theory.
Quantum computation as a field is interesting for precisely the fact that it
demonstrates that quantum computers can perform resource-bounded tasks
which are (in some cases, provably) beyond the capabilities of
classical computers. In this paper, we demonstrate that the
usage of ``write-only memory" (WOM), a computational component that is
used exclusively for being written to, and never being read, (which is
little more than a joke in the classical setup,) can improve the power of
quantum computers significantly. We prove that a
quantum computer
using a WOM can solve problems that neither a classical computer with
a WOM nor a quantum computer without a WOM can solve, when all
machines are restricted to use a constant amount of read/write memory;
in fact, we show that certain tasks that cannot be achieved by
classical probabilistic Turing machines, even if they are allowed to
use sublogarithmic amounts of read/write memory, can be performed by
quantum finite automata augmented with write-only memory (QFA-WOMs).

In the rest of the paper, we first present the basic notation and
terminology that will be employed. After a review of the automaton
variants whose powers will be compared with those of the machines with
write-only memory, we give a formal definition of the automata
with WOM. Following a thorough examination of the power of a QFA-WOM
variant with severe restrictions on the way in which it can access its
WOM, we look at less restricted models. The interesting question
of just how much WOM is sufficient to endow an otherwise finite-state
machine with the capability of recognizing nonregular languages is addressed, and a way of looking at our results as contributions
about quantum function computation is presented.

% SSSSSSSSSSSSSSSSSSSSSSSSSSSSSSSSSSSSSSSSSSSSSSSSSSSSSSSSSSSSSSSSSSSSSSSSSSSSSSSS %
% SSSSSSSSSSSSSSSSSSSSSSSSSSSSSSSSSSSSSSSSSSSSSSSSSSSSSSSSSSSSSSSSSSSSSSSSSSSSSSSS %
% SSSSSSSSSSSSSSSSSSSSSSSSSSSSSSSSSSSSSSSSSSSSSSSSSSSSSSSSSSSSSSSSSSSSSSSSSSSSSSSS %
\section*{Preliminaries} \label{preliminaries}
% SSSSSSSSSSSSSSSSSSSSSSSSSSSSSSSSSSSSSSSSSSSSSSSSSSSSSSSSSSSSSSSSSSSSSSSSSSSSSSSS %
% SSSSSSSSSSSSSSSSSSSSSSSSSSSSSSSSSSSSSSSSSSSSSSSSSSSSSSSSSSSSSSSSSSSSSSSSSSSSSSSS %
% SSSSSSSSSSSSSSSSSSSSSSSSSSSSSSSSSSSSSSSSSSSSSSSSSSSSSSSSSSSSSSSSSSSSSSSSSSSSSSSS %

For a given vector $ v $, $ v[i] $ is the $ i^{th} $ entry of $ v $ and for a given string $ w $,
$ |w| $ is the length of $ w $.

$ \Sigma $ (\textit{input alphabet}): $ \Sigma $ is a finite set of symbols,
	i.e. $ \Sigma = \{ \sigma_{1}, \ldots, \sigma_{|\Sigma|} \} $.
	As a convention, $ \Sigma $ never contains the symbol $ \boldmath{\#} $ (\textit{the blank symbol}),
	$ \mathbf{\cent} $ (\textit{the left input end-marker}), and 
	$ \dollar $ (\textit{the right input end-marker}).
	$ \tilde{\Sigma} $ denotes the set $ \Sigma \cup \{\cent,\dollar\} $.
	Additionally, $ \tilde{w} $ denotes the string $ \cent w \dollar $, 
	for any given input string $ w \in \Sigma^{*} $.

$ Q $ (\textit{the set of internal states}): $ Q $ is a finite set of internal states,
	i.e. $ Q = \{ q_{1},\ldots,q_{|Q|} \} $. 
	Unless otherwise specified, $ q_{1} $ is \textit{the initial state}.
	Moreover, depending on the context, $ Q_{a} \subseteq Q $ (resp., $ Q \setminus Q_{a} $) 
	is the set of accepting (resp., rejecting) states.
	
$ \setD $ (\textit{the set of head directions}): 
	$ <\mspace{-4mu}> $ is the set $ \{ \leftarrow, \downarrow , \rightarrow \} $, where 
	``$ \leftarrow $" means that the (corresponding) head moves one square to the left,
	``$ \downarrow $" means that the head stays on the same square, and
	``$ \rightarrow $" means that the head moves one square to the right.
	As a special case, $ \rhd $ is the set $ \{ \downarrow,\rightarrow \} $.

$ \Theta $ (\textit{the counter status}): $ \Theta $ is the set $ \{ 0,1 \} $, where $ 1 $ means that the counter 	
	value is nonzero, and $ 0 $ means that the counter value is zero.
	
$ \delta $ (\textit{the transition function}): The behavior of a machine is specified by
	its transition function. The domain and the range of a transition function may vary with respect to
	the capabilities of the model.

$ f_{\mathcal{M}}^{a}(w) $ (\textit{acceptance probability}): For a given machine $ \mathcal{M} $
	and an input string $ w \in \Sigma^{*} $, $ f_{\mathcal{M}}^{a}(w) $, or shortly $ f_{\mathcal{M}}(w) $, 
	is the probability that $ w $ will be accepted by $ \mathcal{M} $.
	Moreover, $ f_{\mathcal{M}}^{r}(w) $ will be used 
	in order to represent \textit{the rejection probability} of $ w $ by $ \mathcal{M} $.

The language $ L \subset \Sigma^{*} $ recognized by machine $
\mathcal{M} $ with (strict) cutpoint
$ \lambda \in \mathbb{R} $ is defined as
\begin{equation}
       L = \{ w \in \Sigma^{*} \mid f_{\mathcal{M}}(w) > \lambda \}.
\end{equation}
The language $ L \subset \Sigma^{*} $ recognized by machine $
\mathcal{M} $ with nonstrict cutpoint
$ \lambda \in \mathbb{R} $ is defined as \citep{BJKP05}
\begin{equation}
       L = \{ w \in \Sigma^{*} \mid f_{\mathcal{M}}(w) \geq \lambda \}.
\end{equation}
The language $ L \subset \Sigma^{*} $ is said to be recognized by
machine $ \mathcal{M} $ with unbounded error
if there exists a cutpoint  $ \lambda \in \mathbb{R} $ such that
$ L $ is recognized by $ \mathcal{M} $ with strict or nonstrict
cutpoint $ \lambda $.

The language $ L \subset \Sigma^{*} $ recognized by machine $
\mathcal{M} $ with error bound $ \epsilon $
($ 0 \le \epsilon < \frac{1}{2} $) is defined as
\begin{itemize}
       \item $ f_{\mathcal{M}}^{a}(w) \ge 1 - \epsilon $ when $ w \in L $,
       \item $ f_{\mathcal{M}}^{r}(w) \ge 1 - \epsilon $ when $ w \notin L $.
\end{itemize}
This situation is also known as recognition with bounded error.

The language $ L \subset \Sigma^{*} $ is said to be recognized by machine $ \mathcal{M} $ 
with (positive) one-sided bounded error if there exists a $ p \in (0,1] $ such that
\begin{itemize}
	\item $ f_{\mathcal{M}}(w) \geq p $ when $ w \in L $ and
	\item $ f_{\mathcal{M}}(w) = 0 $ when $ w \notin L $.
\end{itemize}
Equivalently, it can be said that $ L \subset \Sigma^{*} $ is recognized by machine $ \mathcal{M} $ 
with (positive) one-sided error bound $ \epsilon $, where $ \epsilon = 1-p $ (and so $ \epsilon \in [0,1) $).

The language $ L \subset \Sigma^{*} $ is said to be recognized by machine $ \mathcal{M} $ 
with negative one-sided bounded error if there exists a $ p \in (0,1] $ such that
\begin{itemize}
	\item $ f_{\mathcal{M}}(w) = 1 $ when $ w \in L $ and
	\item $ f_{\mathcal{M}}(w) \le 1-p $ when $ w \notin L $.
\end{itemize}
Equivalently, it can be said that $ L \subset \Sigma^{*} $ is recognized by machine $ \mathcal{M} $ 
with negative one-sided error bound $ \epsilon $, where $ \epsilon = 1-p $ (and so $ \epsilon \in [0,1) $).
	
% SSSSSSSSSSSSSSSSSSSSSSSSSSSSSSSSSSSSSSSSSSSSSSSSSSSSSSSSSSSSSSSSSSSSSSSSSSSSSSSS %
% SSSSSSSSSSSSSSSSSSSSSSSSSSSSSSSSSSSSSSSSSSSSSSSSSSSSSSSSSSSSSSSSSSSSSSSSSSSSSSSS %
% SSSSSSSSSSSSSSSSSSSSSSSSSSSSSSSSSSSSSSSSSSSSSSSSSSSSSSSSSSSSSSSSSSSSSSSSSSSSSSSS %
\section*{Conventional computation models}
% SSSSSSSSSSSSSSSSSSSSSSSSSSSSSSSSSSSSSSSSSSSSSSSSSSSSSSSSSSSSSSSSSSSSSSSSSSSSSSSS %
% SSSSSSSSSSSSSSSSSSSSSSSSSSSSSSSSSSSSSSSSSSSSSSSSSSSSSSSSSSSSSSSSSSSSSSSSSSSSSSSS %
% SSSSSSSSSSSSSSSSSSSSSSSSSSSSSSSSSSSSSSSSSSSSSSSSSSSSSSSSSSSSSSSSSSSSSSSSSSSSSSSS %

In this section, we will review the conventional computational models (i.e. those not involving write-only memory) to be used in the paper. The reader is assumed to be familiar with the  standard definitions (involving a read-only input tape and one
read/write work tape) \citep{AB09} for
deterministic and probabilistic Turing machines (TM and PTM, respectively). 
Our definition of quantum Turing machine (QTM) is a
modification of the one found in \citep{Wa98}\endnote{The QTM model is appropriate for studying the
effect of space bounds on computational power. See \citep{Ya93} for an
alternative model of quantum computation.}.
Technically, we allow the additional finite register, which is observed after
each step of the computation to decide whether to accept, reject, or
continue, to have multiple symbols in its alphabet corresponding to
each of  these alternatives, and to be refreshed to its initial symbol
after each observation\endnote{Unlike \citep{Wa98}, 
we also allow efficiently computable
irrational numbers as transition amplitudes in our QTM's. This
simplifies the description of some algorithms in the remainder of this
paper.}. The addition of this finite register, which will be explained in greater detail in the context of the definition of quantum counter automata below, allows our QTM's to
implement general quantum operations, and therefore to simulate their
classical counterparts precisely and efficiently. This result was
shown for QTM's with classical tape head position by Watrous \citep{Wa03}.

Our discussion will focus almost entirely on realtime computation, where the input is consumed one symbol per step in a single left-to-right pass, and the decision is announced immediately upon the reading of the right end-marker.
A realtime $ k \in \mathbb{Z}^{+} $ counter automaton  (RT-$ k $CA) is a 
realtime finite state automaton augmented with $ k $ counters which can be modified by some amount
from $ \lozenge = \{ -1, 0 , +1 \} $ (``$ -1 $" means that the value of the counter is decreased by 1, ``$ 0 $" means that the value 
	of the counter is not changed, and ``$ +1 $" means that the value 
	of the counter is increased by 1), and where the signs of these counters are also taken into account during transitions.

For a given input string  $ w \in \Sigma $ ($ \tilde{w} $ is on the input tape),
a configuration of a RT-$ k $CA is composed of the following elements:
\begin{itemize}
	\item the current internal state,
	\item the position of the input head, and
	\item the contents of the counters.
\end{itemize}
The computation begins with the initial configuration,
in which the internal state is $ q_{1} $, the value(s) of the counter(s) is (are) zero(s), and
the input head is placed on symbol $ \cent $.

Formally, a realtime probabilistic $ k $-counter automaton
(RT-P$ k $CA) $ \mathcal{M} $ is a 5-tuple\endnote{The reader may find it useful to consult the descriptions of
the common notational items in the discussion, given immediately after
the introduction.}
\begin{equation}
	\mathcal{P} = (Q,\Sigma,\delta,q_{1},Q_{a}).
\end{equation}
The transitions of $ \mathcal{M} $ are specified by $\delta$ as follows: $ \delta(q,\sigma,\bar{\theta},q') $ is the probability that $ \mathcal{M} $ will change its state to $ q' \in Q $,  if it is originally  in state $ q \in Q $, scanning the symbol $ \sigma \in \tilde{\Sigma} $ 
	 on the input tape, and sensing $ \bar{\theta} \in \Theta^{k} $ in its counter(s) (i.e.
	$ \bar{\theta}[i] $ is the status of the $ i^{th} $ counter, where $ 1 \le i \le k $).
In each transition, the input tape head moves one square to the right, and the counters are updated with respect to $ \bar{c} = D_{c}(q') $, where   $ D_{c} $ is a function from $ Q $ to $ \lozenge^{k} $, (i.e.
	the value of the $ j^{th} $ counter is updated by $ \bar{c}[j] $, where $ 1 \le j \le k $).

The input string is accepted by a RT-P$ k $CA 
if the computation 
ends in an accepting state.

A realtime deterministic $ k $-counter automaton
(RT-D$ k $CA) is just a RT-P$ k $CA in which all transitions with nonzero probability have probability 1.

A realtime nondeterministic $ k $-counter automaton
(RT-N$ k $CA) is just a RT-P$ k $CA that is interpreted to recognize a language with cutpoint 0, that is, the language it recognizes is the set of all and only the strings that it accepts with nonzero probability.

For the quantum case, we will only be concerned with one-counter automata.
A realtime quantum $ 1 $-counter automaton (RT-Q$ 1 $CA)\endnote{Note that our definition of quantum counter automata
is more general than the previous ones, \citep{Kr99,BFK01,YKTI02,YKI05} since it is based on
general quantum operators.
} is a 6-tuple
\begin{equation}
	\mathcal{M} = (Q,\Sigma,\Omega,\delta,q_{1},Q_{a}).
\end{equation}
	 When $ \mathcal{M} $ is in state $ q \in Q $, reading symbols $ \sigma \in \tilde{\Sigma} $ 
	 on the input tape,
	and sensing $ \theta \in \Theta $ on the counter, it changes its state to $ q' \in Q $, 
	updates the value of the counter with respect to $ c \in \lozenge  $,
	moves the input tape head one square to the right, and
	writes $ \omega \in \Omega $ in the finite register with transition amplitude
	$ \delta(q,\sigma,\theta,q',c,\omega) \in \mathbb{C} $, satisfying the well-formedness condition to be described below.

As seen above, the quantum machines in this paper will be distinguished from their classical counterparts
by the presence of the item $\Omega$ (\textit{the finite 
register alphabet}) in their definitions. This register, whose incorporation in a classical machine would not make any change to its computational power, is an essential part of our quantum models. In realtime computation, the usage of the finite register can be simplified so that the intermediate observations, mentioned above for the case of general QTMs, are not required, and a single measurement of the internal state at the end suffices for our purposes \citep{Ya10A,YS10C}. 
$ Q $ is partitioned into two disjoint sets, $ Q_{a} $, and $ Q_{r} = Q \setminus Q_{a} $.
In each transition, the quantum machine goes through the following phases:
\begin{enumerate}
	\item \textit{pre-transition phase}: 
		reset the register to its initial symbol ``$ \omega_{1} $";
	\item \textit{transition phase}:
		update the content of the register, in addition to the changes in the configuration components normally associated by the transitions of the corresponding classical machine.
\end{enumerate}

On the right end-marker, the projective measurement 
\begin{equation}
	P = \{ P_{\tau \in \{a, r\}} \mid P_{\tau} = \sum_{q \in Q_{\tau}} \ket{q}\bra{q} \}
\end{equation}
is performed. The result associated with $P_{\tau} $ is simply $\tau$, and the input is accepted if ``$a$" is observed. Note that this just means that the acceptance probability is the sum of the squares of the moduli of the amplitudes of the accepting states at the end of the computation.

Since we do not consider the register content as part of the
configuration, the register can be seen as the
``environment" interacting with the ``principal system" that is the rest of the quantum machine \citep{NC00}.
$ \delta $ therefore induces a set of configuration transition
matrices, $ \{ E_{\omega \in \Omega} \} $, where the $ (i,j)^{th} $ entry of $ E_{\omega}$,
the amplitude of the transition  from $ c_{j} $ to $ c_{i} $ by
writing $ \omega \in \Omega $ on the register,
is defined by $ \delta $ whenever $ c_{j} $ is reachable from $ c_{i}
$  in one step, and is zero otherwise. 
The $ \{ E_{\omega \in \Omega} \} $ form an operator $ \mathcal{E} $.

Let $ \mathcal{C} $ be the set of configurations that can be attained by the machine for a given input string.
According to the modern understanding of quantum computation \citep{AKN98}, 
a quantum machine is said to be \textit{well-formed}
if $ \mathcal{E} $ is a superoperator (selective quantum operator), i.e.
\begin{equation}
      \sum_{\omega \in \Omega} E_{\omega}^{\dagger}E_{\omega} = I.
\end{equation}
$ \mathcal{E} $ can be represented by a $ | \mathcal{C} | |
\Omega | \times | \mathcal{C} | $-dimensional
matrix $ \mathsf{E} $ (Figure \ref{figure:matrix-E}) by concatenating each $ E_{\omega \in \Omega} $ one under the
other. It can be verified that $ \mathcal{E} $ is a superoperator if and only
if the columns of $ \mathsf{E} $ form an orthonormal set.

\begin{center}
\begin{figure}[h!]
	\centering
	\caption{Matrix $ \mathsf{E} $}
	\begin{minipage}{0.4\textwidth}
		\begin{equation}
		\begin{array}{ccccc}
			\multicolumn{1}{c|}{} & c_{1} & c_{2} & \ldots & \multicolumn{1}{c|}{ c_{|\mathcal{C}|} } \\
			\hline
			\multicolumn{1}{c|}{c_{1}} & & & & \multicolumn{1}{c|}{} \\
			\multicolumn{1}{c|}{c_{2}} & & & & \multicolumn{1}{c|}{} \\
			\multicolumn{1}{c|}{\vdots} & \multicolumn{4}{c|}{ E_{\omega_{1}} } \\
			\multicolumn{1}{c|}{c_{|\mathcal{C}|}} & & & & \multicolumn{1}{c|}{} \\
			\hline	
			\multicolumn{1}{c|}{c_{1}} & & & & \multicolumn{1}{c|}{} \\
			\multicolumn{1}{c|}{c_{2}} & & & & \multicolumn{1}{c|}{} \\
			\multicolumn{1}{c|}{\vdots} & \multicolumn{4}{c|}{ E_{\omega_{2}} } \\
			\multicolumn{1}{c|}{c_{|\mathcal{C}|}} & & & & \multicolumn{1}{c|}{} \\
			\hline\multicolumn{1}{c|}{c_{1}} & & & & \multicolumn{1}{c|}{} \\
			\multicolumn{1}{c|}{c_{2}} & & & & \multicolumn{1}{c|}{} \\
			\multicolumn{1}{c|}{\vdots} & \multicolumn{4}{c|}{\vdots } \\
			\multicolumn{1}{c|}{c_{|\mathcal{C}|}} & & & & \multicolumn{1}{c|}{} \\
			\hline	
			\multicolumn{1}{c|}{c_{1}} & & & & \multicolumn{1}{c|}{} \\
			\multicolumn{1}{c|}{c_{2}} & & & & \multicolumn{1}{c|}{} \\
			\multicolumn{1}{c|}{\vdots} & \multicolumn{4}{c|}{ E_{\omega_{|\Omega|}} } \\
			\multicolumn{1}{c|}{c_{|\mathcal{C}|}} & & & & \multicolumn{1}{c|}{} \\
			\hline	
		\end{array}
		\end{equation}
	\end{minipage}
	\label{figure:matrix-E}
\end{figure}
\end{center}

We define a realtime quantum finite automaton (RT-QFA) as just a RT-Q1CA that never updates its counter. 
The class of languages recognized with bounded error by RT-QFAs equals the class of regular languages
\citep{Bo03,Je07,AY10A}.

\begin{lemma}
	\label{lem:classical-simulated-by-quantum}
	Any classical automaton can be simulated by a quantum automaton of the corresponding type exactly,
	such that the simulating and simulated machines agree on the value of the acceptance probability of any string.
\end{lemma}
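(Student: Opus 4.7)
The plan is to construct, for any classical automaton of the types considered here (deterministic or probabilistic finite automata, counter automata, or Turing machines), a quantum automaton of the same type that reproduces its acceptance probability on every input. The key idea is to use the finite register $\Omega$ (the ``environment'' discussed just before the lemma) to attach a distinct tag to every nonzero classical branch, so that branches originating at the same configuration become mutually orthogonal in the extended space and never interfere.

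Concretely, given a classical probabilistic machine $\mathcal{P}=(Q,\Sigma,\delta_{\mathcal{P}},q_{1},Q_{a})$ of the appropriate type, I would build a quantum machine $\mathcal{M}$ with the same state set, input alphabet, counter update rule $D_{c}$, initial state, and accepting set, and with $\Omega$ large enough to house a distinct symbol $\omega(q,\sigma,\bar{\theta},q')$ for every classical transition with nonzero probability. I then set
\begin{equation*}
\delta_{\mathcal{M}}(q,\sigma,\bar{\theta},q',\bar{c},\omega)=\sqrt{\delta_{\mathcal{P}}(q,\sigma,\bar{\theta},q')}
\end{equation*}
whenever $\bar{c}=D_{c}(q')$ and $\omega=\omega(q,\sigma,\bar{\theta},q')$, and zero otherwise. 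Deterministic machines are the special case in which every such square root is $0$ or $1$, and the counter-free and finite-register-only variants are obtained by dropping the corresponding coordinates from $\delta$.

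Next, I would verify well-formedness using the matrix $\mathsf{E}$ from Figure \ref{figure:matrix-E}. Because each $\omega$ labels a unique transition, each column of $\mathsf{E}$ (indexed by a source configuration $c$) contains at most one nonzero entry inside each block $E_{\omega}$, and within a column those entries live in pairwise distinct rows across different $\omega$; the squared moduli summed over all $\omega$ equal $\sum_{q'}\delta_{\mathcal{P}}(q,\sigma,\bar{\theta},q')=1$. Hence the columns of $\mathsf{E}$ are orthonormal and $\sum_{\omega}E_{\omega}^{\dagger}E_{\omega}=I$, so $\mathcal{E}$ is a legitimate superoperator.

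Finally, I would show by induction on the number of steps that after reading any prefix $u$ of $\tilde{w}$, the mixed state of $\mathcal{M}$ is diagonal in the configuration basis with diagonal entries exactly matching the classical distribution over configurations of $\mathcal{P}$ after reading $u$. The induction step holds because the distinct $\omega$-labels on the branches fired in a single step ensure that no two of them can contribute interfering amplitudes to the same post-step configuration; their squared moduli therefore add classically. Applying this to all of $\tilde{w}$ and summing the diagonal entries indexed by $Q_{a}$ under the final projective measurement gives $f_{\mathcal{M}}^{a}(w)=f_{\mathcal{P}}^{a}(w)$. The main obstacle, conceptually, is convincing oneself that a register which is reset before every transition still suffices to kill all interference; once one observes that interference can only occur within a single step and that the per-step tags already separate branches, the remainder is a routine book-keeping argument.
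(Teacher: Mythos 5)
Your construction is correct, and it is essentially the standard argument that the paper itself relies on: the paper's ``proof'' of this lemma is only a pointer to the literature, and the references cited there establish the result by exactly your device of tagging each nonzero classical branch with a distinct register symbol $\omega$ so that the superoperator condition $\sum_{\omega}E_{\omega}^{\dagger}E_{\omega}=I$ reduces to the stochasticity of the classical transition function and the traced-out state stays diagonal. Indeed, the paper uses the same trick in its own later constructions (see the endnote explaining why the transitions in Lemma \ref{wom:lem:RT-D1BCA-by-RT-QFA-IOC} write a register symbol determined by the source state), so your proposal matches the intended proof rather than offering a genuinely different route.
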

\begin{proof}
	See \citep{Pa00,Hi08,Wa09,AY10A,Ya10A,YS10C}.
 \end{proof}

The next lemma demonstrates a useful programming trick about counters.
(We will show this for the quantum case. It is well known that the
same method can also be employed for classical counter machines.) For
any counter automaton model A, let A($m$) be a machine of type A with
the additional ability of updating each of its counters with an
increment from the set $ \{ -m,\ldots,m \} $, where $ m>1 $, in a single step.

\begin{lemma}
       \label{wom:lem:inc-m-equal-inc-1}
       For any RT-Q1CA($ m $) $ \mathcal{M} $, there exists a corresponding
RT-Q1CA $ \mathcal{M'} $ such that
       \begin{equation}
               f_{\mathcal{M}}(w)=f_{\mathcal{M'}}(w),
       \end{equation}
       for all $ w \in \Sigma^{*} $, where $ m > 1 $.
\end{lemma}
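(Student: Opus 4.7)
The plan is to simulate $\mathcal{M}$'s counter value $v$ by compressing it: a value $k \approx |v|/m$ is stored in $\mathcal{M}'$'s counter, while the sign and the residue modulo $m$ are kept in $\mathcal{M}'$'s finite state. Each single-step update of $v$ by an amount $c \in \{-m,\ldots,m\}$ then corresponds to a change of $k$ by at most one, as required for $\mathcal{M}'$ to be a RT-Q$1$CA.

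Concretely, I would define an encoding $\phi(v)=(s,r,k)$ for $v\in\mathbb{Z}$ by taking
$(+,\,v\bmod m,\,\lfloor v/m\rfloor)$ for $v\geq 0$ and
$(-,\,((-v-1)\bmod m)+1,\,\lfloor(-v-1)/m\rfloor)$ for $v<0$,
so that $(+,0,0)$ is the unique encoding of $v=0$, the residue ranges over $\{0,\ldots,m-1\}$ on the positive side and over $\{1,\ldots,m\}$ on the negative side. This slight asymmetry is essential: it ensures that whenever $k\geq 1$ one has $|v|\geq m$, so a change of magnitude at most $m$ cannot flip the sign of $v$ in a single step. As a result, when $k\geq 1$ the updated triple $(s',r',c')$ is completely determined by $(s,r,c)$ via a standard carry rule that is uniform in $k$. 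This uniformity is essential, because $\mathcal{M}'$'s transitions can depend on its counter only through the status bit $\theta'=[k\neq 0]$ and hence cannot distinguish, say, $k=1$ from $k=2$.

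The machine $\mathcal{M}'$ will then have state space $Q\times\{+,-\}\times\{0,\ldots,m\}$, the same register alphabet $\Omega$, and accepting states inherited from $Q_a$. Its transition function $\delta'$ will be defined as follows. Given input $(q,s,r,\sigma,\theta')$, first recover the simulated status $\theta=[v\neq 0]$, which equals $1$ unless $(s,r)=(+,0)$ and $\theta'=0$. Then, for each $(q',c,\omega)$ with $\delta(q,\sigma,\theta,q',c,\omega)=\alpha$, derive $(s',r',c')$ from $(s,r,c)$ by the uniform carry rule when $\theta'=1$, and by a finite case analysis on $v\in\{-m+1,\ldots,m-1\}$ (during which sign flips are allowed and $c'\in\{0,+1\}$) when $\theta'=0$; finally set $\delta'((q,s,r),\sigma,\theta',(q',s',r'),c',\omega)=\alpha$. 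On configurations outside the image of $\phi$ (for instance $(-,0,0)$), I would define $\delta'$ to act as the identity with register symbol $\omega_{1}$, so that these states are stable.

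Two things then need to be verified. First, by induction on the computation step, the quantum state of $\mathcal{M}'$ at each time $t$ is the $\phi$-image of $\mathcal{M}$'s state at time $t$; since the accepting/rejecting partition is preserved, $f_{\mathcal{M}'}(w)=f_{\mathcal{M}}(w)$ for every $w\in\Sigma^{*}$. Second, well-formedness transfers from $\mathcal{M}$ to $\mathcal{M}'$: on the $\phi$-image subspace the Kraus operators $E'_{\omega}$ are isometric images of the $E_{\omega}$ of $\mathcal{M}$, so $\sum_{\omega} E'^{\dagger}_{\omega} E'_{\omega}$ restricted there equals the identity, while on the orthogonal complement the trivial extension contributes the identity as well. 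The main obstacle---and what shapes the whole encoding---is the $k$-uniformity constraint: the transition must be the same function of $k$ for all $k\geq 1$, which is only possible if all potential sign flips are deferred to the $k=0$ stratum, where $\mathcal{M}'$ can branch on $\theta'=0$ and handle them by a separate finite case analysis.
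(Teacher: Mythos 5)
Your argument is correct, and it rests on the same core idea as the paper's proof: compress the counter by a factor of $m$, keep the residue in the finite control, and propagate a per-step carry in $\{-1,0,+1\}$. Where you genuinely diverge is in the treatment of negative counter values. The paper uses the single bijection $\varphi(x)=(\lfloor x/m\rfloor,\,x\bmod m)$ from $\mathbb{Z}$ onto $\mathbb{Z}\times\{0,\ldots,m-1\}$, lets the simulating counter go negative, and applies one uniform rule, $k'=k+\lfloor (j+i)/m\rfloor$ and $j'=(j+i)\bmod m$, everywhere --- including across zero; the zero test of $\mathcal{M}$ is then recovered exactly as in your construction, from the pair (counter status, residue), since $x=0$ iff both components of $\varphi(x)$ vanish. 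Your sign-magnitude encoding, the asymmetric residue ranges, and the separate finite case analysis on the $k=0$ stratum are machinery you need only because you force the compressed counter to store an $|v|$-based quotient and hence remain nonnegative; nothing in the model requires this, and with floor division the ``sign flip'' issue you identify as the main obstacle simply does not arise, since the carry rule is already $k$-uniform and sign-uniform. (A minor slip in your version: the $k=0$ stratum is $v\in\{-m,\ldots,m-1\}$, not $\{-m+1,\ldots,m-1\}$; the case $v=-m$, encoded as $(-,m,0)$, must be included in the case analysis, though it causes no difficulty.) What your write-up buys in exchange for the extra bookkeeping is that it is more explicit than the paper on two points the published proof glosses over: how $\mathcal{M}$'s status bit $\theta$ is reconstructed from the simulator's own status bit together with its state component, and why well-formedness transfers to $\mathcal{M}'$ (the configuration map is injective, so the relabelled Kraus columns stay orthonormal and the identity action on unreachable configurations does not interfere).
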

\begin{proof}
       Let $ \mathcal{M} = (Q,\Sigma,\Omega,\delta,q_{1},Q_{a}) $. We construct
 $ \mathcal{M}' = (Q',\Sigma,\Omega,\delta',q_{1}',Q_{a}') $.
$ Q' $ contains $ m $ states    for each  state of $ \mathcal{M} $, that
is, the states of $ \mathcal{M}' $ are of the form
       \begin{equation}
               \LRofC{q,i} \in Q \times \{0,\ldots,m-1\}.
       \end{equation}
        Moreover, $ q_{1}' = \LRofC{q_{1},0} $, and
       \begin{equation}
               Q_{a}' = \{ \LRofC{q,i} \mid q \in Q_{a}, i \in \{0,\ldots,m-1\} \}.
       \end{equation}
       Let
       \begin{equation}
               \varphi : \mathbb{Z} \rightarrow \mathbb{Z} \times \{ 0,\ldots,m-1\}
       \end{equation}
       be a bijection such that
       \begin{equation}
               \varphi(x)=
               \left( \left\lfloor \frac{x}{m} \right\rfloor, (x \mod m) \right).
       \end{equation}
       Hence, we can say that the counter values of $ \mathcal{M} $, say $ x
\in \mathbb{Z} $,
       can be equivalently represented by $ \varphi(x) $, based on which we
will construct $ \mathcal{M}' $,
       where $ \varphi(x)[1] $ is stored by the counter, and $ \varphi(x)[2] $
       is stored by the internal state.
       That is, for any configuration of $ \mathcal{M} $, say ($ q,x $),
       \begin{equation}
               ( \LRofC{ q , \varphi(x)[2]},
                       \varphi(x)[1])
       \end{equation}
is an equivalent configuration of $ \mathcal{M}' $.
       Moreover, the transitions of $ \mathcal{M}' $ can be obtained from
those of $ \mathcal{M} $
       in the following way:
       for any $ i \in \{-m,\ldots,m\} $
       and $ j \in \{0,\ldots,m-1\} $,
       the part of the transition
       \begin{equation}
               (q,\sigma) \overset{\delta}{\longrightarrow} \alpha (q',i,\omega)
       \end{equation}
       of $ \mathcal{M} $ is replaced by transition
       \begin{equation}
               ( \LRofC{q,j},\sigma) \overset{\delta'}{\longrightarrow}
               \alpha \left( \LRofC{ q', j+i \mod m)},
               \left\lfloor \frac{j+i}{m} \right\rfloor,\omega \right)
       \end{equation}
       in $ \mathcal{M}' $, where $ q \in Q $, $ \sigma \in \tilde{\Sigma}
$, $\omega \in \Omega $, and
       $ \alpha \in \mathbb{C} $ is the amplitude of the transition.
       Since $ \varphi $ is a bijection, the configuration matrix of $
\mathcal{M} $ is isomorphic to
       that of $ \mathcal{M}' $ for any input string $ w \in \Sigma^{*} $.
       Therefore, they carry out exactly the same computation on a given
input string, say $ w \in \Sigma^{*} $
       and so
       \begin{equation}
               f_{\mathcal{M}}(w) = f_{\mathcal{M}'}(w).
       \end{equation}
 \end{proof}

An $ r $-reversal RT-$ k $CA \citep{Ch81}, denoted as $ r $-rev-RT-$ k $CA, is a RT-$ k $CA where
the number of alternations from increasing to decreasing 
and vice versa on each counter is restricted by $ r $, where $ r $ is a nonnegative integer.

A RT-$ k $CA with blind counters, denoted a RT-$ k $BCA, is a RT-$ k $CA that never checks the status of its counter(s),
(and so the component $ \Theta^{k} $ is completely removed from $ \delta $,) and accepts its input only if all counters are zero, and the processing of the input has ended in an accept state.

\begin{fact}
	\citep{Fr79}
	For every $k$, if $ L $ is recognized by a deterministic RT-$ k $BCA (RT-D$ k $BCA), then for every $ \epsilon \in (0, \frac{1}{2}) $, then
	there exists a probabilistic RT-1BCA (RT-P1BCA) recognizing $L$ with negative one-sided error bound $ \epsilon $.
\end{fact}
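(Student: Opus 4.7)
My plan is to use Freivalds' random linear combination trick to collapse the $k$ blind counters into a single one. Let $\mathcal{D}$ be the given RT-D$k$BCA recognizing $L$, and fix $\epsilon \in (0, \frac{1}{2})$; I will choose an integer $N$ with $N > 1/\epsilon$. The RT-P1BCA $\mathcal{P}$ that I construct will, upon reading the left end-marker $\cent$, branch with uniform probability $1/N^k$ into $N^k$ computation paths, each labeled by a coefficient tuple $(r_1, \ldots, r_k) \in \{1, \ldots, N\}^k$ that is then stored in the finite control for the remainder of the computation.

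Along each branch, $\mathcal{P}$ simulates the state transitions of $\mathcal{D}$ symbol by symbol, which it can do deterministically because $\mathcal{D}$ is blind and so its next state depends only on the current state and input symbol. Whenever $\mathcal{D}$ would update its $k$ counters by $(d_1, \ldots, d_k) \in \{-1, 0, +1\}^k$, $\mathcal{P}$ adds $\sum_{i=1}^k r_i d_i$ to its single counter. This per-step increment lies in $\{-kN, \ldots, kN\}$, so by the classical-probabilistic analog of Lemma~\ref{wom:lem:inc-m-equal-inc-1} it can be realized by unit increments after storing a residue modulo $kN$ in the finite control. On reading $\dollar$, $\mathcal{P}$ applies its blind acceptance rule: accept iff the simulated state of $\mathcal{D}$ lies in $Q_a$ and $\mathcal{P}$'s counter is zero.

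For correctness, if $w \in L$ then $\mathcal{D}$ ends in an accept state with every final counter value $c_i = 0$, so $\sum_i r_i c_i = 0$ on every branch and $f_{\mathcal{P}}(w) = 1$. If $w \notin L$, then either the simulated state is not accepting (contributing $0$ to the acceptance probability) or some $c_{i^*} \neq 0$. In the latter subcase, for each fixed assignment of the other coordinates the equation $r_{i^*} c_{i^*} = -\sum_{j \neq i^*} r_j c_j$ has at most one solution $r_{i^*} \in \{1, \ldots, N\}$, so averaged over all $N^k$ branches the acceptance probability is at most $N^{k-1}/N^k = 1/N < \epsilon$, which is exactly the required negative one-sided error bound.

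The step that I expect to require the most care is the blind-counter version of Lemma~\ref{wom:lem:inc-m-equal-inc-1}: one must check that the residue component stored in the state does not spuriously enable or disable acceptance, which is handled by declaring as accepting only those states whose residue component is zero, so that ``$\mathcal{P}$'s counter equals zero'' together with ``residue equals zero'' correctly captures ``the emulated large-increment counter equals zero''. Beyond this bookkeeping and the standard observation that a one-step probabilistic split into finitely many branches is available to a realtime probabilistic machine, the construction is routine.
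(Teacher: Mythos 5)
Your proposal is correct, and it follows the same overall strategy as Freivalds' construction that the paper reviews in Figure~\ref{wom:fig:fr-79} and Lemma~\ref{wom:lem:RT-PkBCA-by-RT-P1BCA}: collapse the $k$ blind counters into one by tracking a random linear combination of them, so that the all-zero counter vector maps to zero with certainty (preserving the negative one-sided guarantee) while a nonzero vector maps to zero only with small probability. The one genuine difference is the randomization. You draw $k$ independent coefficients $r_{1},\ldots,r_{k}$ uniformly from $\{1,\ldots,N\}$ and bound the failure probability by $1/N$ via the observation that a linear equation with a nonzero coefficient has at most one root in $r_{i^{*}}$; Freivalds (as presented in the paper) draws a single $r\in\{1,\ldots,R\}$ and uses the coefficients $r,r^{2},\ldots,r^{k}$, bounding the failure probability by $k/R$ via root-counting for a degree-$k$ polynomial. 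Your version spends more randomness (an $N^{k}$-way branch rather than an $R$-way branch) but keeps the per-step increments polynomially bounded ($kN$ rather than on the order of $R^{k}$), and the analysis is slightly more elementary. Your handling of the residue component in the large-increment-to-unit-increment reduction --- insisting that acceptance require both the physical counter and the stored residue to be zero --- is exactly the point that needs care when adapting Lemma~\ref{wom:lem:inc-m-equal-inc-1} to blind acceptance, and you have it right.
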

We can generalize this result to probabilistic RT-$ k $BCAs (RT-P$ k $BCAs), where $ k>1 $.
\begin{lemma}
	\label{wom:lem:RT-PkBCA-by-RT-P1BCA}
	Let $ \mathcal{P} $ be a given RT-P$ k $BCA and $ \epsilon \in (0,\frac{1}{2}) $ be a given error bound. 
	Then, there exists a RT-P1BCA($ R $) $ \mathcal{P}' $ such that for all $ w \in \Sigma^{*} $,
	\begin{equation}
		f_{\mathcal{P}}(w) \le f_{\mathcal{P}'}(w) \le f_{\mathcal{P}}(w)+\epsilon(1-f_{\mathcal{P}}(w)),
	\end{equation}
	where $ R = 2^{\left\lceil \frac{k}{\epsilon} \right\rceil} $.
\end{lemma}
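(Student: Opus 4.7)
The plan is to have $\mathcal{P}'$ simulate $\mathcal{P}$ by collapsing the $k$ blind counters into a single blind counter that carries a random integer linear combination of the individual counter values. First, on reading $\cent$, $\mathcal{P}'$ picks a weight tuple $(a_1,\dots,a_k)\in\{1,\dots,S\}^k$ uniformly at random, where $S=\lceil 1/\epsilon\rceil$, and stores this choice in its finite-state control. For every subsequent input symbol, $\mathcal{P}'$ reproduces the state transitions and the internal coin flips of $\mathcal{P}$ exactly, and on each step adds $\sum_{i=1}^{k}a_i\,\Delta_i$ to its single counter, where $(\Delta_1,\dots,\Delta_k)\in\lozenge^k$ is the counter-update tuple that $\mathcal{P}$ would have applied (this is well-defined because the counters of $\mathcal{P}$ are blind, so $\Delta_i$ depends only on the simulated state and the input symbol). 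On $\dollar$, $\mathcal{P}'$ accepts iff the simulated state of $\mathcal{P}$ is accepting and its single counter holds the value zero.

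The correctness analysis splits into two halves. For the lower bound $f_{\mathcal{P}}(w)\le f_{\mathcal{P}'}(w)$, whenever a branch of $\mathcal{P}$ ends in an accepting state with all counter values $v_i=0$, the combined value $\sum a_i v_i$ vanishes regardless of the weights, so $\mathcal{P}'$ also accepts on that branch with probability $1$. For the upper bound, consider a branch on which $\mathcal{P}$ ends in an accepting state but with some $v_j\ne 0$: conditioning on the $a_i$ for $i\ne j$, the equation $a_j v_j=-\sum_{i\ne j}a_i v_i$ has at most one solution $a_j\in\{1,\dots,S\}$, so the probability that $\mathcal{P}'$ accepts on this branch is at most $1/S\le\epsilon$. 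Branches on which $\mathcal{P}$ lands in $Q\setminus Q_a$ are always rejected by $\mathcal{P}'$ too, since the simulated state is exact. Summing over all rejecting branches of $\mathcal{P}$ yields $f_{\mathcal{P}'}(w)\le f_{\mathcal{P}}(w)+\epsilon(1-f_{\mathcal{P}}(w))$.

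It remains to verify that $\mathcal{P}'$ really lies in the class RT-P1BCA($R$). The single counter is inspected only in the final accept/reject decision (so it is blind); the weight tuple together with the simulated state of $\mathcal{P}$ forms a finite state set; and the per-step counter increment satisfies $|\sum a_i\Delta_i|\le kS\le 2k/\epsilon$. Using $k\ge 1$ and $\epsilon\in(0,\tfrac12)$ one has $k/\epsilon\ge 2$, and then the elementary bound $2^{y}\ge 2y$ for $y\ge 2$ gives $R=2^{\lceil k/\epsilon\rceil}\ge 2k/\epsilon$, so the increment bound is satisfied. The main obstacle I anticipate is the careful bookkeeping between the two independent probability spaces, namely the initial random choice of weights and $\mathcal{P}$'s own coin flips: once each $\mathcal{P}$-branch is treated separately and conditioned upon as above, the error decomposition reduces to the short pigeonhole argument on the conditional distribution of $a_j$.
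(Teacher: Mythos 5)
Your proof is correct, and structurally it is the same reduction the paper performs: collapse the $k$ blind counters into a single blind counter carrying a random integer linear combination of their values, observe that the combination vanishes whenever all counters do, bound the probability of a spurious zero by $\epsilon$ on each branch, and conclude with the same branch-by-branch decomposition into the two-sided inequality. The genuine difference is the randomization and the resulting error estimate. The paper (following Freivalds' method reviewed in Figure \ref{wom:fig:fr-79}) draws a \emph{single} $r$ uniformly from $\{1,\ldots,R\}$ and uses the weights $r^{1},\ldots,r^{k}$, so a false zero means $r$ is a root of the nonzero polynomial $\sum_{i} v_{i}x^{i}$ of degree at most $k$, giving error at most $k/R\le\epsilon$. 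You instead draw $k$ \emph{independent} weights uniformly from $\{1,\ldots,S\}$ with $S=\lceil 1/\epsilon\rceil$ and get error at most $1/S\le\epsilon$ by conditioning on all but one coordinate --- a Schwartz--Zippel-style argument for the multilinear form $\sum_{i}a_{i}v_{i}$; since the weights are fixed before the input is read, they are indeed independent of $\mathcal{P}$'s coin flips, so the conditioning is legitimate. The trade-off: your scheme needs $S^{k}$ rather than $R$ extra control states, but its per-step increments are bounded by $kS\le R$, which you verify explicitly; the paper's increments $\sum_{i}r^{i}$ can in fact exceed the advertised bound $R=2^{\lceil k/\epsilon\rceil}$, a small imprecision that is harmless only because Lemma \ref{wom:lem:inc-m-equal-inc-1} lets one renormalize any bounded increment set afterwards. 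Both routes are sound and deliver the stated inequality.
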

\begin{proof} 
	Based on the probabilistic method described in Figure \ref{wom:fig:fr-79},
	we can obtain $ \mathcal{P}' $ by making the following modifications on $ \mathcal{P} $:
	\begin{figure}[h!]
       \caption{Probabilistic zero-checking of multiple counters by one counter}
       \centering
       \fbox{
       \begin{minipage}{0.9\textwidth}
               \small
               In this figure, we review a method presented by Freivalds in \citep{Fr79}:
               Given a machine with $ k>1 $ counters, say $ C_{1},\ldots,C_{k}  $,
whose values
               can be updated using the increment set $ \{-1,0,1\} $, we can build
a machine with a single counter, say $ C $,  whose value can be
updated using the increment set $ \{ -R,\ldots,R \} $
               ($ R = 2^{\left\lceil \frac{k}{\epsilon} \right\rceil} $), such that
               all updates on $ C_{1},\ldots,C_{k} $ can be simulated on $ C $ in
the sense that
               (i) if all values of $ C_{1},\ldots,C_{k} $ are zeros, then the
value of $ C $ is zero; and
               (ii) if the value of at least one of $ C_{1},\ldots,C_{k} $ is nonzero, then
               the value of $ C $ is nonzero with probability $ 1-\epsilon $,
               where $ \epsilon \in (0,\frac{1}{2}) $.
               The  probabilistic method for this simulation is as follows:
               \begin{itemize}
                       \item Choose a number $ r $ equiprobably from the set $ \{1,\ldots,R\} $.
                       \item The value of $ C $ is increased (resp., decreased) by $ r^{i} $
                               if the value of $ C_{i} $ is increased (resp., decreased) by 1.
               \end{itemize}
       \end{minipage}
       }
       \label{wom:fig:fr-79}
\end{figure}
	\begin{enumerate}
		\item At the beginning of the computation, $ \mathcal{P}' $ equiprobably 
			chooses a number $ r $ from the set $ \{1,\ldots,R\} $.
		\item For each transition of $ \mathcal{P} $, in which the values of counters are updated
		by $ (c_{1},\ldots,c_{k}) \in \{-1,0,1\}^{k} $, i.e., the value of the $ i^{th} $ counter
		is updated by $ c_{i} $ ($ 1 \le i \le k $), $ \mathcal{P} $ 
		makes the same transition by updating its counter values by $ \sum \limits_{i=1}^{k} r^{i}c_{i} $.
	\end{enumerate}
	Hence, (i) for each accepting path of $ \mathcal{P} $, the input is accepted by $ \mathcal{P}' $, too;
	(ii) for each rejecting path of $ \mathcal{P} $, the input may be accepted by $ \mathcal{P}' $
	with a probability at most $ \epsilon $.
	By combining these cases, we obtain the following inequality for any input string $ w \in \Sigma^{*} $: 
	\begin{equation}
		f_{\mathcal{P}}(w) \le f_{\mathcal{P}'}(w) \le f_{\mathcal{P}}(w)+\epsilon(1-f_{\mathcal{P}}(w))
	\end{equation}
 \end{proof}
\begin{theorem}
	\label{wom:thm:RT-PkBCA-by-RT-P1BCA}
	If $ L $ is recognized by a RT-P$ k $BCA
	with error bound $ \epsilon \in (0, \frac{1}{2}) $, then
	$ L $ is recognized by a RT-P1BCA with error bound 
	$ \epsilon' $ ($ 0 < \epsilon < \epsilon' < \frac{1}{2} $). 
	Moreover, $ \epsilon' $ can be tuned to be arbitrarily close to $ \epsilon $.
\end{theorem}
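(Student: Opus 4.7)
The plan is to chain two reductions: first, use Lemma~\ref{wom:lem:RT-PkBCA-by-RT-P1BCA} to collapse the $k$ blind counters of $\mathcal{P}$ into a single counter whose increment set is $\{-R,\ldots,R\}$, and then fold that wide increment set back into $\{-1,0,+1\}$ using the classical analog of Lemma~\ref{wom:lem:inc-m-equal-inc-1} (which the excerpt notes is well known in the classical setting).

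Given the target error bound $\epsilon' \in (\epsilon, 1/2)$, I first set the slack parameter $\delta = (\epsilon' - \epsilon)/(1 - \epsilon)$. Since $0 < \epsilon < \epsilon' < 1/2$, a short calculation shows $\delta \in (0, 1/2)$, so Lemma~\ref{wom:lem:RT-PkBCA-by-RT-P1BCA} applies with this slack and yields a RT-P1BCA$(R)$ $\mathcal{P}''$, with $R = 2^{\lceil k/\delta \rceil}$, satisfying
\begin{equation*}
f_{\mathcal{P}}(w) \leq f_{\mathcal{P}''}(w) \leq f_{\mathcal{P}}(w) + \delta\bigl(1 - f_{\mathcal{P}}(w)\bigr)
\end{equation*}
for every $w \in \Sigma^{*}$.

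Next, I apply the classical analog of Lemma~\ref{wom:lem:inc-m-equal-inc-1} to $\mathcal{P}''$. Each counter value $x \in \mathbb{Z}$ is represented by the bijection $\varphi(x) = (\lfloor x/R \rfloor,\, x \bmod R)$: the quotient is kept in the physical counter of the new RT-P1BCA $\mathcal{P}'$, while the residue is carried by state pairs $\langle q, j\rangle \in Q \times \{0,\ldots,R-1\}$. Because $x = 0$ iff both coordinates of $\varphi(x)$ vanish, the blind acceptance criterion is preserved by declaring $\langle q,0\rangle$ accepting exactly when $q$ was accepting in $\mathcal{P}''$; each probabilistic branch of $\mathcal{P}''$ lifts to a unique branch of $\mathcal{P}'$ with the same probability, giving $f_{\mathcal{P}'}(w) = f_{\mathcal{P}''}(w)$.

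Finally, I would verify the error bounds. For $w \in L$, $f_{\mathcal{P}'}(w) \geq f_{\mathcal{P}}(w) \geq 1-\epsilon > 1-\epsilon'$. For $w \notin L$, the map $x \mapsto x + \delta(1-x) = (1-\delta)x + \delta$ is increasing, so $f_{\mathcal{P}'}(w) \leq \epsilon + \delta(1-\epsilon) = \epsilon'$. Hence $\mathcal{P}'$ recognizes $L$ with error bound $\epsilon'$; letting $\epsilon' \downarrow \epsilon$ drives $\delta \downarrow 0$ and $R \uparrow \infty$, so $\epsilon'$ can indeed be made arbitrarily close to $\epsilon$ at the cost of a blow-up in the state set. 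The main thing to watch is the second reduction: one must check explicitly that ``all counters zero'' survives the re-encoding, which it does precisely because both the physical counter and the residue component of the state are forced to $0$ at acceptance.
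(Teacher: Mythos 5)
Your proposal is correct and follows essentially the same route as the paper: first apply Lemma~\ref{wom:lem:RT-PkBCA-by-RT-P1BCA} to collapse the $k$ blind counters into one counter with increment set $\{-R,\ldots,R\}$, then invoke the classical analog of Lemma~\ref{wom:lem:inc-m-equal-inc-1} to reduce to unit increments, choosing the slack parameter so that the resulting error is $\epsilon' = \epsilon + \delta(1-\epsilon)$. Your version is slightly more explicit than the paper's (in particular, in checking that the all-counters-zero acceptance condition survives the $\varphi$ re-encoding), but the underlying argument is identical.
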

\begin{proof}
	Let $ \mathcal{P} $ be a RT-P$ k $BCA recognizing $ L $ with error bound $ \epsilon $.
	By using the previous lemma (Lemma \ref{wom:lem:RT-PkBCA-by-RT-P1BCA}), 
	for any $ \epsilon'' \in (0,\frac{1}{2}) $,
	we can construct a RT-P1BCA($ R $), say $ \mathcal{P}'' $, from $ \mathcal{P} $,
	where $ R=2^{\left\lceil \frac{k}{\epsilon''} \right\rceil} $.
	Hence, depending on the value of $ \epsilon $,
	we can select $ \epsilon'' $ to be sufficiently small such that
	$ L $ is recognized by $ \mathcal{P}'' $ with error bound 
	$ \epsilon'=\epsilon+\epsilon''(1-\epsilon) < \frac{1}{2} $.
	Since for each RT-P1BCA($ m $), there is an equivalent RT-P1BCA for any $ m>1 $,
	$ L $ is also recognized by	a RT-P1BCA 
	with bounded error $ \epsilon' $, which can be tuned to be arbitrarily close to $ \epsilon $.
 \end{proof}
\begin{corollary}
	If $ L $ is recognized by a RT-P$ k $BCA
	with negative one-sided error bound $\epsilon \in (0, 1) $, then
	$ L $ is recognized by a RT-P1BCA with negative one-sided error bound 
	$ \epsilon' $ ($ 0 < \epsilon  < \epsilon' $). 
	Moreover, $ \epsilon' $ can be tuned to be arbitrarily close to $ \epsilon  $.
\end{corollary}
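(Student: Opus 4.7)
The plan is to mirror the structure of the proof of Theorem \ref{wom:thm:RT-PkBCA-by-RT-P1BCA}, but to track a one-sided error bound rather than a symmetric one. I would start with a RT-P$k$BCA $\mathcal{P}$ recognizing $L$ with negative one-sided error bound $\epsilon$, so that $f_{\mathcal{P}}(w) = 1$ when $w \in L$ and $f_{\mathcal{P}}(w) \le \epsilon$ when $w \notin L$. I then introduce an auxiliary parameter $\epsilon'' \in (0, \frac{1}{2})$ whose value will be fixed at the end.

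Applying Lemma \ref{wom:lem:RT-PkBCA-by-RT-P1BCA} with error parameter $\epsilon''$ would yield a RT-P1BCA($R$) $\mathcal{P}''$, with $R = 2^{\lceil k/\epsilon'' \rceil}$, satisfying
\begin{equation}
    f_{\mathcal{P}}(w) \le f_{\mathcal{P}''}(w) \le f_{\mathcal{P}}(w) + \epsilon''\bigl(1 - f_{\mathcal{P}}(w)\bigr)
\end{equation}
for every $w \in \Sigma^{*}$. Plugging in the two cases: for $w \in L$ the sandwich collapses to $f_{\mathcal{P}''}(w) = 1$, while for $w \notin L$ the upper bound gives $f_{\mathcal{P}''}(w) \le \epsilon + \epsilon''(1-\epsilon)$. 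Setting $\epsilon' := \epsilon + \epsilon''(1-\epsilon)$ places $\epsilon'$ in $(\epsilon, 1)$, and $\epsilon' \to \epsilon$ as $\epsilon'' \to 0$, which delivers the ``arbitrarily close'' clause.

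The last step is to convert $\mathcal{P}''$ into an equivalent RT-P1BCA with single-step increment set $\{-1,0,+1\}$, using the classical analogue of Lemma \ref{wom:lem:inc-m-equal-inc-1}: store the counter value modulo $R$ inside the internal state and keep only the quotient in the physical counter. The new accepting set is taken to be pairs whose stored remainder is zero and whose surviving counter reading is zero, so the blind-acceptance condition (accept only if the counter is zero at the end) is faithfully preserved. Because this simulation is exact on acceptance probabilities, the one-sided error bound $\epsilon'$ transfers verbatim.

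I do not expect any serious obstacle here. Since the original error is one-sided on the ``accept'' side (probability exactly $1$ on $L$), the lower bound in Lemma \ref{wom:lem:RT-PkBCA-by-RT-P1BCA} automatically upgrades perfect acceptance to $\mathcal{P}''$ without any slack, so only the false-accept probability on $w \notin L$ needs to be controlled, and that is exactly what a small $\epsilon''$ achieves. The one mildly delicate point is checking that the RT-P1BCA($R$)$\to$RT-P1BCA conversion keeps the blind-acceptance semantics intact, but this is immediate from the observation that a value written as (quotient, remainder mod $R$) vanishes if and only if both components vanish.
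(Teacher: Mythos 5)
Your proposal is correct and is essentially the argument the paper intends (the corollary is stated without its own proof, as a direct adaptation of Theorem \ref{wom:thm:RT-PkBCA-by-RT-P1BCA}): apply Lemma \ref{wom:lem:RT-PkBCA-by-RT-P1BCA} with an auxiliary parameter $\epsilon''$, note that the sandwich inequality forces $f_{\mathcal{P}''}(w)=1$ on members and $f_{\mathcal{P}''}(w)\le\epsilon+\epsilon''(1-\epsilon)<1$ on non-members, and then remove the enlarged increment set via the classical analogue of Lemma \ref{wom:lem:inc-m-equal-inc-1}. Your explicit check that the quotient/remainder simulation preserves the blind-acceptance condition (both components zero iff the counter is zero) is a detail the paper leaves implicit, and it is handled correctly.
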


% SSSSSSSSSSSSSSSSSSSSSSSSSSSSSSSSSSSSSSSSSSSSSSSSSSSSSSSSSSSSSSSSSSSSSSSSSSSSSSSS %
% SSSSSSSSSSSSSSSSSSSSSSSSSSSSSSSSSSSSSSSSSSSSSSSSSSSSSSSSSSSSSSSSSSSSSSSSSSSSSSSS %
% SSSSSSSSSSSSSSSSSSSSSSSSSSSSSSSSSSSSSSSSSSSSSSSSSSSSSSSSSSSSSSSSSSSSSSSSSSSSSSSS %
\section*{Models with write-only memory}
% SSSSSSSSSSSSSSSSSSSSSSSSSSSSSSSSSSSSSSSSSSSSSSSSSSSSSSSSSSSSSSSSSSSSSSSSSSSSSSSS %
% SSSSSSSSSSSSSSSSSSSSSSSSSSSSSSSSSSSSSSSSSSSSSSSSSSSSSSSSSSSSSSSSSSSSSSSSSSSSSSSS %
% SSSSSSSSSSSSSSSSSSSSSSSSSSSSSSSSSSSSSSSSSSSSSSSSSSSSSSSSSSSSSSSSSSSSSSSSSSSSSSSS %

We model a WOM as a two-way write-only tape having alphabet $ \Gamma = \{\gamma_{1},\ldots,\gamma_{|\Gamma|}\} $.
	$ \Gamma $ contains $ \# $ and $ \varepsilon $ (\textit{the empty string} or 
	\textit{the empty symbol}).

In a general two-way WOM, a transition ``writing" $ \varepsilon $ on the tape causes the tape head to move in the specified direction without changing the symbol in the square under the original location of the head.
If we restrict the tape head movement of a WOM to $ \rhd $, i.e. one-way,
we obtain a ``push-only stack" (POS).
In the case of machines with POS, we assume that the write-only tape head does not move if a $ \varepsilon $ is written, and it moves one square to the right
if a symbol different than $ \varepsilon $ is written.
A special case of the POS setup is the ``increment-only counter" (IOC) \citep{SYY10A}, where $ \Gamma $ includes only $ \varepsilon $ and a single \textit{counting} symbol 
(different than $ \# $).

For any standard machine model, say, M, we use the name M-WOM to
denote M augmented with a WOM component. A TM-WOM, for instance, just has an additional
write-only tape.
The computational power of the PTM-WOM is easily seen to be the same
as that of the PTM; since the machine does not use the contents of the
WOM in any way when it decides what to do in the next move, and the probability distribution of the partial machine configurations (not including the WOM) is not dependent on the WOM content at any time, every
write-only action can just as well be replaced with a write-nothing
action. The following lemma shows this more formally for the models that will come under focus in this paper.

\begin{lemma}
	\label{wom:lem:classical-wom}
	The computational power of any realtime classical finite automaton is unchanged
	when the model is augmented with a WOM.
\end{lemma}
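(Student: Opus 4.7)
The plan is to exploit the defining property of a write-only memory, namely that no transition of the host machine ever reads the WOM contents or WOM head position. Hence the random variables describing the WOM (its tape content and head location) are causally downstream of everything that determines acceptance, and can be marginalised out without changing any observable probability.

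Concretely, let $\mathcal{M}$ be a realtime classical finite automaton augmented with a WOM, with state set $Q$ and transition function $\delta$ taking $(q,\sigma) \in Q \times \tilde{\Sigma}$ and producing a probability distribution over tuples $(q',\gamma,d) \in Q \times \Gamma \times \setD$ (with $d$ restricted to $\rhd$ in the POS/IOC case). I would construct a WOM-free realtime finite automaton $\mathcal{M}'$ with the same state set, same initial and accepting states, and with transition function
\begin{equation}
	\delta'(q,\sigma,q') \;=\; \sum_{\gamma \in \Gamma} \sum_{d \in \setD} \delta(q,\sigma,q',\gamma,d),
\end{equation}
i.e.\ the marginal obtained by ``forgetting'' the WOM actions. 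Well-formedness of $\delta'$ (that the $\delta'(q,\sigma,\cdot)$ sum to $1$) follows immediately from the well-formedness of $\delta$.

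The next step is to show by induction on the number $t$ of input symbols consumed that, for every $t$ and every $q \in Q$, the probability that $\mathcal{M}$ is in state $q$ after step $t$ equals the probability that $\mathcal{M}'$ is in state $q$ after step $t$. The base case is immediate since both machines start deterministically in $q_1$. For the inductive step, because $\delta$ does not depend on the WOM content or WOM head position, the distribution over the next state given the current state and input symbol is exactly $\delta'(q,\sigma,\cdot)$; summing over the previous-state distribution (which agrees by the induction hypothesis) gives equality at step $t+1$. In the deterministic case this reduces to an even simpler statement: $\delta'$ is also deterministic, and the two machines trace out the same sequence of states.

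Finally, since the input is realtime and the decision is made by measuring the state after reading $\dollar$, the acceptance probability of any $w \in \Sigma^{*}$ depends only on the distribution over $Q$ at the final step, so $f_{\mathcal{M}}(w) = f_{\mathcal{M}'}(w)$ for every $w$. There is no real obstacle in this proof; the only point that needs care is making explicit that the marginalisation step is valid \emph{because} no component of $\delta$ ever conditions on WOM-side information, and that in a realtime model the input-head position is implicit in the step count so it does not need to be tracked separately. The statement then covers the deterministic, probabilistic, two-way WOM, POS and IOC cases uniformly.
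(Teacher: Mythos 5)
Your proof is correct and is essentially the paper's argument viewed in the opposite direction: where you marginalise the WOM actions out of $\delta$ and prove by induction that the state distribution is preserved at every step, the paper constructs a level-preserving homomorphism from the configuration tree of the augmented machine onto the state tree of the base machine with the same probability bookkeeping. Both hinge on the same observation — that $\delta$ never conditions on WOM-side information and acceptance depends only on the final internal state — so no further comparison is needed.
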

\begin{proof}
	For a given machine $ \mathcal{M} $ and an input string $w$, consider the tree $ \mathcal{T} $ of states, where the root is the initial state, each subsequent level corresponds to the processing of the next input symbol, and the children of each node $N$ are the states that have nonzero-probability transitions from $S$ with the input symbol corresponding to that level. Each such edge in the tree is labeled with the corresponding transition probability. The probability of node $N$ is the product of the probabilities on the path to $N$ from the root. The acceptance probability is the sum of the probabilities of the accept states at the last level.

Now consider attaching a WOM to $ \mathcal{M} $, and augmenting its program so that every transition now also specifies the action to be taken on the WOM. Several new transitions of this new machine may correspond to a single transition of $ \mathcal{M} $, since, for example, a transition with probability $p$ can be divided into two transitions with probability $ p \over 2 $, whose effects on the internal state are identical, but which write different symbols on the WOM. It is clear that many different programs can be obtained by augmenting $ \mathcal{M} $ in this manner with different WOM actions. Visualize the configuration tree $ \mathcal{T}_{new} $ of any one of these new machines on input $w$. There exists a homomorphism $h$ from $ \mathcal{T}_{new} $ to $ \mathcal{T} $, where $h$ maps nodes in $ \mathcal{T}_{new} $ to nodes on the same level in $ \mathcal{T} $, the configurations in $h^{-1}(N)$ all have $N$ as their states, and the total probability of the members of $h^{-1}(N)$ equals the probability of $N$ in $ \mathcal{T} $, for any $N$. We conclude that all the machines with WOM accept $w$ with exactly the same probability as $w$, so the WOM does not make any difference.
 \end{proof}

As stated, our aim is to show that WOMs do increase the power of QTMs. We will focus on several variants of realtime quantum finite automata with WOM
(RT-QFA-WOMs), which are just QTM-WOMs which do not use their work
tapes, and move the input tape head to the right in every step.

More precisely, we will examine the power of RT-QFAs that are augmented with WOM, POS, or IOC,
namely, the models RT-QFA-WOM, RT-QFA-POS, or RT-QFA-IOC (0-rev-RT-Q1CA), respectively.
Note that RT-QFA-WOMs, RT-QFA-POSs, and RT-QFA-IOCs are special cases of
quantum realtime Turing machines, pushdown automata, and 
one-counter automata, respectively.

Formally, a RT-QFA-WOM $ \mathcal{M} $ is a 7-tuple
\begin{equation}
	\label{def:RT-QFA-WOM}
	(Q,\Sigma,\Gamma,\Omega,\delta,q_{1},Q_{a}).
\end{equation}
	When in state $ q \in Q $ and reading symbol $ \sigma \in \tilde{\Sigma} $ 
	on the input tape, $ \mathcal{M} $ changes its state to $ q' \in Q $, writes $ \gamma \in \Gamma $ 
	and $ \omega \in \Omega $ on the WOM tape and the finite register, respectively,
	and then updates the position of the WOM tape head with respect to $ d_{w} \in \setD $
	with transition amplitude $ \delta(q,\sigma,q',\gamma,d_{w},\omega) = \alpha $, where
	$ \alpha \in \mathbb{C} $ and $ |\alpha| \leq 1 $.
	
	In order to represent all transitions from the case where $ \mathcal{M} $ is in state $ q \in Q $ and reading
	symbol $ \sigma \in \tilde{\Sigma} $ together, we will use the notation
	\begin{equation}
		\delta(q,\sigma) =  \sum_{(q',\gamma,d_{w},\omega) \in Q \times \Gamma \times \setD \times \Omega }
			\delta(q,\sigma,q',\gamma,d_{w},\omega) (q',\gamma,d_{w},\omega),
	\end{equation}
	where
	\begin{equation}
		\sum_{(q',\gamma,d_{w},\omega) \in Q \times \Gamma \times \setD \times \Omega }
			|\delta(q,\sigma,q',\gamma,d_{w},\omega)|^{2} = 1.
	\end{equation}

A configuration of a RT-QFA-WOM is the collection of
\begin{itemize}
	\item the internal state of the machine,
	\item the position of the input tape head,
	\item the contents of the WOM tape, and the position of the WOM tape head.
\end{itemize}

The formal definition of the RT-QFA-POS is similar to that of the RT-QFA-WOM,
except that the movement of the WOM tape head is restricted to $ \rhd $, and so
the position of that head does not need to be a part of a configuration.
On the other hand, the definition of the RT-QFA-IOC can be simplified by removing the $ \Gamma $
component from (\ref{def:RT-QFA-WOM}):

A RT-QFA-IOC $ \mathcal{M} $ is a 6-tuple
\begin{equation}
	(Q,\Sigma,\Omega,\delta,q_{1},Q_{a}).
\end{equation}
	When in state $ q \in Q $, and reading symbol $ \sigma \in \tilde{\Sigma} $ 
	on the input tape,  $ \mathcal{M} $ changes its state to $ q' \in Q $, writes $ \omega $ in the register, and updates the value of its counter
	by $ c \in \vartriangle = \{0,+1\} $
	with transition amplitude $ \delta(q,\sigma,q',c,\omega) = \alpha $, where
	$ \alpha \in \mathbb{C} $ and $ |\alpha| \leq 1 $.
	
	In order to show all transitions from the case where $ \mathcal{M} $ is in state $ q \in Q $ and reads
	symbol $ \sigma \in \tilde{\Sigma} $ together, we use the notation
	\begin{equation}
		\delta(q,\sigma) =  \sum_{(q',c,\omega) \in Q \times \vartriangle \times \Omega }
			\delta(q,\sigma,q',c,\omega) (q',c,\omega),
	\end{equation}
	where
	\begin{equation}
		\sum_{(q',c,\omega) \in Q \times \vartriangle \times \Omega }
			|\delta(q,\sigma,q',c,\omega)|^{2} = 1.
	\end{equation}

A configuration of a RT-QFA-IOC is the collection of
\begin{itemize}
       \item the internal state of the machine,
       \item the position of the input tape head, and
       \item the value of the counter.
\end{itemize}

% SSSSSSSSSSSSSSSSSSSSSSSSSSSSSSSSSSSSSSSSSSSSSSSSSSSSSSSSSSSSSSSSSSSSSSSSSSSSSSSS %
% SSSSSSSSSSSSSSSSSSSSSSSSSSSSSSSSSSSSSSSSSSSSSSSSSSSSSSSSSSSSSSSSSSSSSSSSSSSSSSSS %
% SSSSSSSSSSSSSSSSSSSSSSSSSSSSSSSSSSSSSSSSSSSSSSSSSSSSSSSSSSSSSSSSSSSSSSSSSSSSSSSS %
\section*{Increment-only counter machines}
% SSSSSSSSSSSSSSSSSSSSSSSSSSSSSSSSSSSSSSSSSSSSSSSSSSSSSSSSSSSSSSSSSSSSSSSSSSSSSSSS %
% SSSSSSSSSSSSSSSSSSSSSSSSSSSSSSSSSSSSSSSSSSSSSSSSSSSSSSSSSSSSSSSSSSSSSSSSSSSSSSSS %
% SSSSSSSSSSSSSSSSSSSSSSSSSSSSSSSSSSSSSSSSSSSSSSSSSSSSSSSSSSSSSSSSSSSSSSSSSSSSSSSS %

We are ready to start our demonstration of the superiority of quantum computers with WOM over those without WOM. We will examine the capabilities of RT-QFA-IOCs in both the bounded and unbounded error settings, and show that they can simulate a family of conventional counter machines, which are themselves superior to RT-QFAs, in both these cases. 

% sssssssssssssssssssssssssssssssssssssssssssssssssssssssssssssssssssssssssssssssss %
% sssssssssssssssssssssssssssssssssssssssssssssssssssssssssssssssssssssssssssssssss %
\subsection*{Bounded error} \label{wom:ioc:bounded}
% sssssssssssssssssssssssssssssssssssssssssssssssssssssssssssssssssssssssssssssssss %
% sssssssssssssssssssssssssssssssssssssssssssssssssssssssssssssssssssssssssssssssss %

The main theorem to be proven in this subsection is

\begin{theorem}
	\label{wom:thm:RT-Q1BCA-by-RT-QFA-IOC}
	The class of languages recognized with bounded error by RT-QFA-IOCs contains all languages recognized with 
	bounded error by conventional realtime quantum automata with one blind counter (RT-Q1BCAs).
\end{theorem}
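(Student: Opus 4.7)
Given a RT-Q1BCA $\mathcal{M}=(Q,\Sigma,\Omega,\delta,q_1,Q_a)$ recognizing $L$ with bounded error $\epsilon$, the plan is to construct a RT-QFA-IOC $\mathcal{M}'$ recognizing $L$ with some bounded error $\epsilon'<\tfrac{1}{2}$. The key asymmetry to overcome is that $\mathcal{M}$'s counter takes signed updates in $\{-1,0,+1\}$ and is zero-checked at the end, whereas $\mathcal{M}'$'s IOC can only be incremented and is never sensed. My strategy is to move the zero-check into the finite state of $\mathcal{M}'$ via a quantum Fourier transform (QFT), and to exploit the IOC only for its configuration-separating role: two paths ending with different IOC values cannot interfere.

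Concretely, I will augment the state space to $Q\times\mathbb{Z}_R$, where the second component is a phase register of dimension $R$ chosen as a function of $\epsilon$. Reading $\cent$, $\mathcal{M}'$ applies the QFT on this register to prepare $\tfrac{1}{\sqrt{R}}\sum_{k=0}^{R-1}\ket{q_1,k}$. For each subsequent input symbol, $\mathcal{M}'$ simulates $\mathcal{M}$'s state transition on the first component and, in the branch indexed by $k$, multiplies the transition amplitude by $\omega^{kc}$ (with $\omega=e^{2\pi i/R}$) whenever $\mathcal{M}$'s counter update is $c\in\{-1,0,+1\}$; simultaneously the IOC is incremented by $1$ when $c\neq 0$ and by $0$ otherwise, so that its final value records $n_{+}+n_{-}$. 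On $\dollar$, $\mathcal{M}'$ applies the inverse QFT and accepts iff the phase register reads $0$ and the first component lies in $Q_a$. Because the accumulated phase in branch $k$ is $\omega^{k(n_{+}-n_{-})}=\omega^{kS}$, inverting the QFT funnels all amplitude, within each IOC slice, into the phase-register position $S \bmod R$; the accept condition thereby isolates those histories of $\mathcal{M}$ with $S\equiv 0\pmod{R}$.

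The technical heart of the argument will be controlling $|f_{\mathcal{M}'}(w)-f_{\mathcal{M}}(w)|$ uniformly in $w$, so that $\mathcal{M}'$ inherits bounded error from $\mathcal{M}$. Two sources of discrepancy need to be bounded. First, \emph{false positives} arise from $\mathcal{M}$-histories with $S\in\{\pm R,\pm 2R,\dots\}$ that still satisfy $S\equiv 0\pmod{R}$; these become more dangerous as $R$ shrinks relative to $|w|$, but the bounded-error hypothesis on $\mathcal{M}$ forces the squared-amplitude mass on $S\neq 0$ to be at most $\epsilon$ when $w\in L$ and analogously bounds the mass on $(Q_a,0)$ when $w\notin L$, so a $|w|/R$-style bound on the spurious contribution should be attainable. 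Second, the IOC's separation of configurations by $n_{+}+n_{-}$ \emph{decoheres} paths that would have interfered in $\mathcal{M}$, possibly shrinking the accept mass on inputs in $L$; this is the main obstacle I foresee, and it may demand a subtler choice of what the IOC encodes, or an additional averaging/fingerprinting device inside the phase register, to recover enough of the lost interference. Picking $R$ as a suitable function of $\epsilon$ and of any extra parameters introduced to address the second issue, I aim to push the total discrepancy below $\tfrac{1}{2}-\epsilon$, yielding $\mathcal{M}'$ with bounded error $\epsilon'<\tfrac{1}{2}$.
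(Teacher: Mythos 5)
There is a genuine gap, and it sits exactly where you place the burden of the zero-test. Your construction checks $S=n_{+}-n_{-}$ only \emph{modulo} $R$: the phase register has constant dimension, so on the right end-marker the inverse QFT isolates the histories with $S\equiv 0\pmod R$, not $S=0$. The bounded-error hypothesis on $\mathcal{M}$ does not control the resulting false positives. For $w\notin L$ the hypothesis bounds the squared-amplitude mass on configurations (accept state, counter $=0$) by $\epsilon$; the mass on, say, (accept state, counter $=R$) is counted as \emph{rejection} under the blind-counter semantics and may be as large as $1-\epsilon$. Concretely, the standard one-blind-counter machine for $L_{eq-1}$ (all states accepting, accept iff the counter is zero) ends on input $a_{1}^{R}$ in an accepting state with counter value $R$ and correctly rejects, while your $\mathcal{M}'$ sees phase-register value $R\bmod R=0$ and accepts with probability close to $1$. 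Since $R$ is a fixed constant while $|S|$ grows linearly in $|w|$, no choice of $R$ as a function of $\epsilon$ repairs this, and a ``$|w|/R$''-style bound is vacuous for a realtime finite-state device. The second difficulty you flag yourself --- the IOC recording $n_{+}+n_{-}$ decoheres paths of $\mathcal{M}$ that should interfere --- is left unresolved and independently breaks the claim that $f_{\mathcal{M}'}\approx f_{\mathcal{M}}$ on members of $L$. As designed, your IOC contributes nothing to the zero-test and only does harm.

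The missing idea is that the IOC itself must perform the zero-test, via interference conditioned on \emph{equality of IOC contents} across superposed copies. The paper runs $m$ copies of $\mathcal{M}$ in superposition; copy $j$ increments the IOC by $j$ whenever $\mathcal{M}$ increments its counter and by $m-j+1$ whenever $\mathcal{M}$ decrements it, so copy $j$ holds IOC value $jS+(m+1)n_{-}$. Corresponding histories in the $m$ copies therefore agree on the IOC iff $S=0$ and pairwise disagree otherwise, for \emph{arbitrarily large} $|S|$. A single $m$-way QFT on the end-marker then recovers the full accepting probability when $S=0$ and leaks only a $\frac{1}{m}$ fraction otherwise, giving $f^{a}_{\mathcal{M}'}(w)=f^{a}_{\mathcal{M}}(w)+\frac{1}{m}f^{r}_{\mathcal{M}}(w)$ and error $\epsilon+\frac{1}{m}(1-\epsilon)<\frac{1}{2}$ for $m>\frac{2-2\epsilon}{1-2\epsilon}$. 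This exact, unbounded-range test is precisely what your constant-size phase register cannot deliver, so the proposal needs to be restructured around this mechanism rather than patched.
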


Before presenting our proof of Theorem \ref{wom:thm:RT-Q1BCA-by-RT-QFA-IOC}, let us demonstrate the underlying idea by showing how RT-QFA-IOCs can simulate a simpler family of machines, namely, deterministic automata with one blind counter. Define a RT-QFA-IOC($ m $) as a RT-QFA-IOC with the capability of incrementing its counter by any 
value from the set $ \{0,\ldots,m\} $, where $ m>1 $.

\begin{lemma}
	\label{wom:lem:RT-D1BCA-by-RT-QFA-IOC}
	If a language $ L $ is recognized by a RT-D1BCA,
	then $ L $ can also be recognized by a RT-QFA-IOC 
	with negative one-sided error bound $ \frac{1}{m} $, for any desired value of $m$.
\end{lemma}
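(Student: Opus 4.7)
The plan is to build a RT-QFA-IOC$(m)$ that simulates $m$ parallel copies of the given RT-D1BCA $\mathcal{D}$ on the same input, where the $j^{\text{th}}$ copy uses a ``scaled'' counter update that is engineered so that all $m$ copies end up with the same physical counter value \emph{iff} $\mathcal{D}$'s net counter change on the input is zero. A quantum Fourier transform on the branch index at the end of the computation then discriminates the two cases. Since RT-QFA-IOC$(m)$ is equivalent in power to RT-QFA-IOC by an immediate adaptation of Lemma \ref{wom:lem:inc-m-equal-inc-1} to the increment-only case (the carry $\lfloor(j+i)/m\rfloor \in \{0,1\}$ whenever $i \in \{0,\ldots,m\}$ and $j \in \{0,\ldots,m-1\}$, so no physical decrement is ever needed), working with the $(m)$-variant is harmless.

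Concretely, let $\mathcal{D} = (Q^{\mathcal{D}},\Sigma,\delta^{\mathcal{D}},q_1^{\mathcal{D}},Q_a^{\mathcal{D}})$. I would give the constructed machine $\mathcal{Q}$ state set $Q^{\mathcal{D}} \times \{0,1,\ldots,m-1\}$, and have it enter the uniform superposition $\tfrac{1}{\sqrt{m}}\sum_{j=0}^{m-1}\ket{q_1^{\mathcal{D}},j}$ while reading $\cent$. For each subsequent input symbol $\sigma$, whenever $\mathcal{D}$ would transition from $q$ to $q'$ with counter update $\Delta \in \{-1,0,+1\}$, $\mathcal{Q}$ in branch $j$ goes from $(q,j)$ to $(q',j)$ and increments its counter by
\begin{equation}
\delta_j(\Delta) \;=\; \begin{cases} j, & \Delta=+1,\\ 0, & \Delta=0,\\ m-j, & \Delta=-1,\end{cases}
\end{equation}
which always lies in $\{0,\ldots,m\}$. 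Reversibility of the deterministic simulation step is arranged in the manner of Lemma \ref{lem:classical-simulated-by-quantum} by letting the register $\Omega$ record the source state $q$, which makes each configuration transition injective. On reading $\$$, $\mathcal{Q}$ performs $\mathcal{D}$'s $\$$-transition and then applies the QFT $\ket{j}\mapsto \tfrac{1}{\sqrt{m}}\sum_{k=0}^{m-1}e^{2\pi ijk/m}\ket{k}$ on the branch component (tensored with identity on state and counter), encapsulated as a single register-write operator. $\mathcal{Q}$ accepts iff the final measurement yields some $(q,0)$ with $q\in Q_a^{\mathcal{D}}$.

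The analysis is then short. Write $k_+$, $k_-$ for the numbers of $+1$, $-1$ updates $\mathcal{D}$ performs on input $w$, and let $q^{\mathrm{fin}}$ be $\mathcal{D}$'s final state. Branch $j$ accumulates counter value $C_j = jk_+ + (m-j)k_-$, so $C_j - C_{j'} = (j-j')(k_+ - k_-)$. If $w\in L$ then $k_+=k_-$ and all $C_j$ share a common value $C$, so the pre-QFT state is $\tfrac{1}{\sqrt{m}}\sum_j\ket{q^{\mathrm{fin}},j,C}$; the QFT maps the branch component exactly to $\ket{0}$, and since $q^{\mathrm{fin}}\in Q_a^{\mathcal{D}}$, the acceptance probability equals $1$. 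If $w\notin L$ because $q^{\mathrm{fin}}\notin Q_a^{\mathcal{D}}$, no accepting basis state is ever populated and the acceptance probability is $0$. If $w\notin L$ because $k_+\neq k_-$, then the $C_j$ are pairwise distinct integers, so the $m$ branches occupy mutually orthogonal configurations throughout the QFT step; a direct computation gives $\Pr[(q^{\mathrm{fin}},k)] = \sum_j |\tfrac{1}{m}e^{2\pi ijk/m}|^2 = \tfrac{1}{m}$ for every $k$, so in particular the acceptance outcome $(q^{\mathrm{fin}},0)$ has probability exactly $\tfrac{1}{m}$.

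The main conceptual obstacle is finding the right correspondence between the classical ``counter returned to zero'' condition and the quantum ``branches can interfere'' condition; the key insight is that $C_j$ as a function of $j$ is constant precisely when $k_+=k_-$, which the QFT naturally detects. The remaining work is essentially bookkeeping: verifying well-formedness of the transition operators $\{E_\omega\}$ (handled by the source-state-in-register trick on the simulation steps, and by letting the QFT step be a single $E_{\omega_1}$), and checking that the IOC$(m)$-to-IOC reduction preserves the no-decrement property (immediate from the floor computation above). This yields a RT-QFA-IOC with $f_{\mathcal{Q}}(w)=1$ for $w\in L$ and $f_{\mathcal{Q}}(w)\le \tfrac{1}{m}$ for $w\notin L$, i.e., negative one-sided error bound $\tfrac{1}{m}$, as required.
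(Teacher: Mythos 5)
Your proof is correct and follows essentially the same route as the paper's: split into $m$ equal-amplitude branches with increments scaled so that all branch counters coincide iff $\mathcal{D}$'s net counter change is zero, record the source state in the register for well-formedness, and apply an $m$-way QFT on the branch index at the end, yielding acceptance probability $1$ on members and exactly $\tfrac{1}{m}$ on non-members that reach an accepting state. The only differences are cosmetic (branch indexing from $0$ rather than $1$, QFT applied unconditionally rather than only on accepting targets, which is equivalent since $\mathcal{D}$ is deterministic), and your explicit check that the RT-QFA-IOC$(m)$-to-RT-QFA-IOC reduction never requires a decrement is a worthwhile detail the paper leaves implicit.
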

\begin{proof}
We will build a RT-QFA-IOC($ m $) that recognizes $L$, which is sufficient by Lemma \ref{wom:lem:inc-m-equal-inc-1}. 

	Throughout this proof, the symbol ``$ i $" is reserved for the imaginary number $ \sqrt{-1} $.
	Let the given RT-D1BCA be $ \mathcal{D} = (Q,\Sigma,\delta,q_{1},Q_{a}) $,
	where $ Q = \{q_{1}, \ldots, q_{n} \} $. We build $ \mathcal{M} = (Q',\Sigma,\Omega,\delta',q_{1,1},Q_{a}') $,
	where
	\begin{itemize}
		\item $ Q'= \{q_{j,1},\ldots,q_{j,n} \mid 1 \le j \le m\} $,
		\item $ Q'_{a} = \{ q_{m,i} \mid q_{i} \in Q_{a} \} $, and 
		\item $ \Omega = \{ \omega_{1}, \ldots, \omega_{n} \} $.
	\end{itemize}
	$ \mathcal{M} $ splits the computation into $m$ paths, i.e. $ \mathsf{path}_{j} $ ($ 1 \le j \le m $),
	with equal amplitude on the left end-marker $ \cent $. That is,
	\begin{equation}
		\delta' (q_{1,1},\cent) =
			\underbrace{\frac{1}{\sqrt{m}}(q_{1,t},0,\omega_{1})}_{\mathsf{path}_{1}} + \cdots +
			\underbrace{\frac{1}{\sqrt{m}}(q_{m,t},0,\omega_{1})}_{\mathsf{path}_{m}},
	\end{equation}
	whenever $ \delta(q_{1},\cent,q_{t},0) = 1 $, where $ 1 \le t \le n $.
	Until reading the right end-marker $ \dollar $, $ \mathsf{path}_{j} $ proceeds in the following way:
	For each  $ \sigma \in \Sigma $ and $  s \in \{1,\ldots,n\} $,
	\begin{eqnarray}
		\label{eq:deterministic-transition}
		\mathsf{path}_{j}: \delta'(q_{j,s},\sigma) & = & (q_{j,t},c_{j},\omega_{s})	
	\end{eqnarray}
	whenever $ \delta(q_{s},\sigma,q_{t},c) = 1 $, where $ 1 \le t \le n $, and
	\begin{itemize}
		\item $ c_{j} = j $ if $ c = 1 $,
		\item $ c_{j} = m-j+1 $ if $ c = -1 $, and
		\item $ c_{j} = 0 $, otherwise.
	\end{itemize} 
	
To paraphrase, each path separately simulates\endnote{Note that each transition of $ \mathcal{M} $ in Equation \ref{eq:deterministic-transition} writes a symbol determined by the source state of the corresponding transition of  $ \mathcal{D} $ to the register. This ensures the orthonormality condition for quantum machines described earlier.}
the computation of $ \mathcal{D} $ on the input string, going through states that correspond to the states of $ \mathcal{D} $, and incrementing their counters whenever $ \mathcal{D} $ changes its counter, as follows:
	\begin{itemize}
		\item $ \mathsf{path}_{j} $ increments the counter by $ j $
			whenever $ \mathcal{D} $ increments the counter by 1,
		\item $ \mathsf{path}_{j} $ increments the counter by $ m-j+1 $
			whenever $ \mathcal{D} $ decrements the counter by 1, and
		\item $ \mathsf{path}_{j} $ does not make any incrementation, otherwise.
	\end{itemize}

	On symbol $ \dollar $, the following transitions are executed (note that the counter updates in this last step are also made according to the setup described above):
	\newline
	If $ q_{t} \in Q_{a} $,
	\begin{equation}
		\label{eq:deterministic-accept}
		\mathsf{path}_{j}: \delta' (q_{j,s},\dollar) =
		\frac{1}{\sqrt{m}} \sum_{l=1}^{m} e^{\frac{2 \pi i}{m}jl} (q_{l,t},c_{j},\omega_{s})
	\end{equation}
	and if $ q_{t} \notin Q_{a} $,
	\begin{equation}
		\label{eq:deterministic-reject}
		\mathsf{path}_{j}: \delta' (q_{j,s},\dollar) = (q_{j,t},c_{j},\omega_{s}),
	\end{equation}
	whenever $ \delta(q_{s},\dollar,q_{t},c) = 1 $, where $ 1 \le t \le n $. 
	
The essential idea behind this setup, where different paths increment their counters with different values to represent increments and decrements performed by $ \mathcal{D} $ is that the increment values used by $ \mathcal{M} $ have been selected carefully to ensure that the counter will have the same value in all of $ \mathcal{M} $'s paths at any time if $ \mathcal{D} $'s counter is zero at that time. Furthermore, all of $ \mathcal{M} $'s paths are guaranteed to have different counter values if $ \mathcal{D} $'s counter is nonzero\endnote{This idea has been adapted from an algorithm by Kondacs and Watrous for a different type of quantum automaton, whose analysis can be found in \citep{KW97}.}.

\begin{figure}[h]
       \caption{$ N $-way quantum Fourier transform}
       \centering
       \fbox{
       \begin{minipage}{0.8\textwidth}
       \footnotesize
               Let $ N > 1 $ be a integer. The $ N $-way QFT is the transformation
\begin{equation}
                       \delta(d_{j}) \rightarrow \alpha
                               \sum\limits_{l = 1}^{N} e^{\frac{2 \pi i }{N}jl} (r_{l}), ~~~~ 1 \le j \le N ,
\end{equation}
       from the
               \textit{domain} states  $ d_{1}, \ldots, d_{N} $ to the
\textit{range} states $ r_{1}, \ldots, r_{N} $.
               $ r_{N} $ is the \textit{distinguished} range element. 
               $ \alpha $ is a real number such that $ \alpha^{2}N \leq 1 $.
               The QFT  can be used to check whether separate computational paths
of a quantum program that are in superposition have converged to the
same configuration at a particular step. Assume that the program has
previously split to $N$ paths, each of which have the same amplitude,
and whose state components are the  $ d_{j} $). In all the uses of the
QFT in our algorithms, one of the following conditions will be
satisfied:
\begin{enumerate}
                       \item The WOM component of the configuration is different in each
of the $N$ paths: In this case, the QFT will further divide each path
to $N$ subpaths, that will differ from each other by the internal
state component. No interference will take place.
                       \item Each path has the same WOM content at the moment of the QFT:
In this case, the paths that have $ r_{1}, \ldots, r_{N-1} $ as their
state components will destructively interfere with each
other \citep{YS09B}, and $ \alpha^{2} N $ of the probability of
the $N$ incoming paths will be accumulated on a single resulting path
with that WOM content, and $ r_{N} $ as its state component.
               \end{enumerate}
       \end{minipage}}
       \label{fig:wom:N-way-QFT}
\end{figure}

	For a given input string $ w \in \Sigma^{*} $,
	\begin{enumerate}
		\item if $ \mathcal{D} $ ends up in a state not in $ Q_{a} $ (and so $ w \notin L $), 
			then $ \mathcal{M} $ rejects the input in each of its $m$ paths, and the overall
			rejection probability is 1;
		\item if $ \mathcal{D} $ ends up in a state in $ Q_{a} $, all paths make an
			$ m $-way QFT (see Figure \ref{fig:wom:N-way-QFT}) 
			whose distinguished target is an accepting state:
			\begin{enumerate}
				\item if the counter of $ \mathcal{D} $ is zero (and so $ w \in L $),
					all paths have the same counter value, that is, they will interfere with each other, and so $ \mathcal{M} $ will accept with probability 1;
				\item if the counter of $ \mathcal{D} $ is not zero (and so $ w \notin L $), there will be no interference, and each path will end by accepting $ w $ with probability $ \frac{1}{m^{2}} $, leading to a total acceptance probability of $ \frac{1}{m} $, and a rejection probability of $ 1 - \frac{1}{m} $.
			\end{enumerate}
	\end{enumerate}
 \end{proof}

\begin{proof}[Proof of Theorem \ref{wom:thm:RT-Q1BCA-by-RT-QFA-IOC}]
	Given	a RT-Q1BCA that recognizes a language $L$ $ \mathcal{M} $ with error bound $ \epsilon < \frac{1}{2} $, we build a RT-QFA-IOC($ m $) $ \mathcal{M'} $, using essentially the same construction as in 
	Lemma \ref{wom:lem:RT-D1BCA-by-RT-QFA-IOC}: $ \mathcal{M'} $ simulates $m$ copies of $ \mathcal{M} $, and these copies use the set of increment sizes described in the proof of Lemma \ref{wom:lem:RT-D1BCA-by-RT-QFA-IOC} to mimic the updates to $ \mathcal{M} $'s counter. Unlike the deterministic machine of that lemma, $ \mathcal{M} $ can fork to multiple computational paths, which is handled by modifying the transformation of Equation \ref{eq:deterministic-transition} as 
	\begin{equation}
		\label{eq:quantum-transition}
		\mathsf{path}_{j}: \delta'(q_{j,s},\sigma,q_{j,t},c_{j},\omega) = \alpha
	\end{equation}
	whenever $ \delta(q_{s},\sigma,q_{t},c,\omega) = \alpha $, where $ 1 \le t \le n $, and
	$ \omega \in \Omega $, and that of Equation \ref{eq:deterministic-accept} as
	\begin{equation}
		\label{eq:quantum-accept}
		\mathsf{path}_{j}: \delta' (q_{j,s},\dollar,q_{l,t},c_{j},\omega) =
		\frac{\alpha}{\sqrt{m}} e^{\frac{2 \pi i}{m}jl},
		\mbox{ for } l \in \{1,\ldots,m\}
	\end{equation}
	whenever $ \delta(q_{s},\dollar,q_{t},c,\omega) = \alpha $, where $ 1 \le t \le n $ and $ \omega \in \Omega $;
causing the corresponding paths of the $m$ copies of $ \mathcal{M} $ to undergo the $m$-way QFTs associated by each accept state as described above at the end of the input.

We therefore have that the paths of $ \mathcal{M} $ that end in non-accept states do the same thing with the same total probability in $ \mathcal{M'} $.  The paths of $ \mathcal{M} $ that end in accept states with the counter containing zero make $ \mathcal{M'} $ accept also with their original total probability, thanks to the QFT. The only mismatch between the machines is in the remaining case of the paths of $ \mathcal{M} $ that end in accept states with a nonzero counter value. As explained in the proof of Lemma \ref{wom:lem:RT-D1BCA-by-RT-QFA-IOC}, each such path will contribute $ \frac{1}{m} $ of its probability to acceptance, and the rest to rejection.

	For any given input string $ w \in \Sigma^{*} $:
	\begin{itemize}
		\item If $ w \in L $, we have $ f_{\mathcal{M}}^{a}(w) \geq 1-\epsilon $ 
			and $ f_{\mathcal{M}}^{r}(w) \leq \epsilon $, then 
			\begin{equation}
				f_{\mathcal{M'}}^{a}(w) = f_{\mathcal{M}}^{a}(w) + \frac{1}{m} f_{\mathcal{M}}^{r}(w) 
				\geq 1 -\epsilon .
			\end{equation}
		\item If $ w \notin L $, we have $ f_{\mathcal{M}}^{a}(w) \leq \epsilon $ and 
			$ f_{\mathcal{M}}^{r}(w) \geq 1 - \epsilon $, then 
			\begin{equation}
				f_{\mathcal{M'}}^{a}(w) = f_{\mathcal{M}}^{a}(w) + \frac{1}{m} f_{\mathcal{M}}^{r}(w) 
				\leq  \epsilon + \frac{1}{m}(1-\epsilon).
			\end{equation}
	\end{itemize}
	Therefore, by setting $ m $ to a value greater than $ \frac{2-2\epsilon}{1-2\epsilon} $,
	$ L $ will be recognized by $ \mathcal{M'} $ with error bound 
	$ \epsilon' = \epsilon + \frac{1}{m}(1-\epsilon) < \frac{1}{2} $.
	Moreover, by setting $ m $ to sufficiently large values, $ \epsilon' $ can be tuned to be arbitrarily close to $ \epsilon $.
 \end{proof}

\begin{corollary}
	If $ L $ is recognized by a RT-Q1BCA (or a RT-P1BCA) $ \mathcal{P} $ 
	with negative one-sided error bound $ \epsilon < 1 $,
	then $ L $ is recognized by a RT-QFA-IOC $ \mathcal{M} $ with 
	negative one-sided error bound $ \epsilon' $, i.e. $ \epsilon < \epsilon' < 1 $.
	Moreover, $ \epsilon' $ can be tuned to be arbitrarily close to $ \epsilon $.
\end{corollary}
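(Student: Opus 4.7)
The plan is to observe that the corollary follows almost immediately from the construction used in the proof of Theorem \ref{wom:thm:RT-Q1BCA-by-RT-QFA-IOC}; the only thing to check is that the simulation preserves the one-sided nature of the error in the ``accepting" direction. First, I would reduce the RT-P1BCA case to the RT-Q1BCA case by invoking Lemma \ref{lem:classical-simulated-by-quantum}: any RT-P1BCA can be simulated exactly by a RT-Q1BCA, so it suffices to work with a RT-Q1BCA $\mathcal{P}$ recognizing $L$ with negative one-sided error bound $\epsilon$.

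Next, I would apply verbatim the construction from Theorem \ref{wom:thm:RT-Q1BCA-by-RT-QFA-IOC}, building a RT-QFA-IOC($m$) $\mathcal{M}'$ (and then appealing to Lemma \ref{wom:lem:inc-m-equal-inc-1} to convert it into a RT-QFA-IOC) whose $m$ parallel branches simulate $\mathcal{P}$, use the encoding of the counter updates described in Lemma \ref{wom:lem:RT-D1BCA-by-RT-QFA-IOC}, and perform an $m$-way QFT on the right end-marker for the paths that would have been accepting in $\mathcal{P}$. The same accounting as before yields, for every $w \in \Sigma^{*}$,
\begin{equation}
f_{\mathcal{M}'}^{a}(w) \;=\; f_{\mathcal{P}}^{a}(w) \,+\, \tfrac{1}{m}\, f_{\mathcal{P}}^{r}(w) \;=\; f_{\mathcal{P}}^{a}(w) \,+\, \tfrac{1}{m}\bigl(1 - f_{\mathcal{P}}^{a}(w)\bigr).
\end{equation}

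Now I would verify one-sidedness. If $w \in L$, then by assumption $f_{\mathcal{P}}^{a}(w) = 1$, so $f_{\mathcal{M}'}^{a}(w) = 1$ as well; this is precisely what the definition of negative one-sided error demands for inputs in $L$. If $w \notin L$, then $f_{\mathcal{P}}^{a}(w) \le \epsilon$, so the displayed identity gives
\begin{equation}
f_{\mathcal{M}'}^{a}(w) \;\le\; \epsilon \,+\, \tfrac{1}{m}(1 - \epsilon) \;=:\; \epsilon',
\end{equation}
which is strictly less than $1$ as long as $\epsilon < 1$. Thus $\mathcal{M}'$ recognizes $L$ with negative one-sided error bound $\epsilon'$, and by taking $m$ arbitrarily large we drive $\epsilon' = \epsilon + \tfrac{1}{m}(1-\epsilon)$ arbitrarily close to $\epsilon$ while always keeping $\epsilon < \epsilon' < 1$.

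There is essentially no obstacle here: the construction, the amplitude bookkeeping, and the QFT-based interference argument have already been carried out in Theorem \ref{wom:thm:RT-Q1BCA-by-RT-QFA-IOC}. The only conceptual point worth emphasizing is that the simulation automatically preserves perfect acceptance (the ``$f = 1$ on $L$" half of the one-sided condition) because the $m$-way QFT applied to $m$ identical paths with a common counter value concentrates all the probability back into the accepting branch; the new mass on the accepting register only ever comes from the previously accepting paths of $\mathcal{P}$, so no rejection probability can leak upward to spoil the $f = 1$ guarantee on inputs in $L$.
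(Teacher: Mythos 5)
Your proposal is correct and follows exactly the route the paper intends: the paper states this corollary without proof as an immediate consequence of the construction in Theorem \ref{wom:thm:RT-Q1BCA-by-RT-QFA-IOC}, and your verification that $f_{\mathcal{M}'}^{a}(w)=f_{\mathcal{P}}^{a}(w)+\frac{1}{m}(1-f_{\mathcal{P}}^{a}(w))$ preserves the exact value $1$ on members (since no rejecting paths exist to leak probability) while giving $\epsilon'=\epsilon+\frac{1}{m}(1-\epsilon)<1$ on non-members is precisely the needed accounting. Your use of Lemma \ref{lem:classical-simulated-by-quantum} to subsume the RT-P1BCA case is also the intended reduction.
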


For a given nonnegative integer $ k $, $ L_{eq-k} $ is the language defined over the alphabet 
$ \{a_{1},\ldots,a_{k},b_{1},\ldots,b_{k}\} $ as the set of all strings containing equal
numbers of $ a_{i} $'s and $ b_{i} $'s, for each $ i \in \{1,\ldots,k\} $.

\begin{fact}
	\citep{Fr79}
	For any nonnegative $ k $, 
	$ L_{eq-k} $ can be recognized by a RT-P$ 1 $BCA with negative one-sided bounded error $ \epsilon $,
	where $ \epsilon < \frac{1}{2} $.	
\end{fact}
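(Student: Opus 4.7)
The plan is a two-step reduction. First, I build a deterministic realtime $k$-blind-counter automaton (RT-D$k$BCA) that recognizes $L_{eq-k}$ exactly. Then I invoke the Freivalds fact recalled just before the $k$-counter generalization, which says that any language accepted by a RT-D$k$BCA is also accepted by a RT-P$1$BCA with negative one-sided error bound $\epsilon$, for any desired $\epsilon \in (0,\tfrac{1}{2})$.

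For the first step I exhibit an obvious RT-D$k$BCA $\mathcal{D}$ over $\Sigma=\{a_{1},\ldots,a_{k},b_{1},\ldots,b_{k}\}$ with $k$ blind counters. Its finite control has a distinguished ``next'' state for each letter of the alphabet, so that the counter-update function $D_{c}$ can increment counter $i$ by $+1$ on reading $a_{i}$, decrement counter $i$ by $-1$ on reading $b_{i}$, and leave all counters alone on the end-markers. The control is arranged so that computation always finishes in an accepting state; by the blind-counter acceptance rule, $\mathcal{D}$ then accepts $w$ if and only if every one of the $k$ counters is zero at the end, i.e.\ $|w|_{a_{i}}=|w|_{b_{i}}$ for every $i \in \{1,\ldots,k\}$, i.e.\ $w \in L_{eq-k}$. (The degenerate case $k=0$ gives $L_{eq-0}=\{\varepsilon\}$ over the empty alphabet and is immediate.)

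For the second step I apply the Freivalds fact to $\mathcal{D}$ with some fixed $\epsilon < \tfrac{1}{2}$, producing a RT-P$1$BCA that accepts every $w \in L_{eq-k}$ with probability $1$ and rejects every $w \notin L_{eq-k}$ with probability at least $1-\epsilon$, which is exactly the claimed guarantee.

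There is no real obstacle in this argument: all of the substance is carried by the cited Freivalds result, whose underlying random-base encoding idea (choose $r$ uniformly from $\{1,\ldots,R\}$ and represent the $k$-tuple of counter values by the single integer $\sum_{i=1}^{k} r^{i}c_{i}$) is already sketched in Figure \ref{wom:fig:fr-79} and reused inside Lemma \ref{wom:lem:RT-PkBCA-by-RT-P1BCA}. My own contribution is only the trivial remark that $L_{eq-k}$ is canonically recognized by a $k$-blind-counter machine; if anything needs care, it is merely checking that the counter-update vector $D_{c}$ of $\mathcal{D}$ assigns the correct nonzero coordinate to each letter-indexed state, which is immediate from the construction above.
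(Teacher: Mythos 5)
Your proposal is correct, and it follows essentially the route the paper intends: the paper states this as a cited Fact from Freivalds (1979) without proof, and the natural derivation is exactly your two-step reduction --- the obvious RT-D$k$BCA for $L_{eq-k}$ (one blind counter per index, incremented on $a_{i}$ and decremented on $b_{i}$, always terminating in an accepting state so that acceptance is governed solely by the all-counters-zero condition) composed with the earlier Freivalds Fact, whose random-base mechanism the paper itself recalls in Figure \ref{wom:fig:fr-79} and Lemma \ref{wom:lem:RT-PkBCA-by-RT-P1BCA}. Your attention to encoding the just-read symbol in the target state so that $D_{c}(q')$ can perform the correct counter update is the only nontrivial bookkeeping point, and you handle it correctly.
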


\begin{corollary}
	RT-QFA-IOCs can recognize some non-context-free languages with bounded error.
\end{corollary}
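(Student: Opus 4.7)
The plan is to obtain the corollary as a direct consequence of chaining the immediately preceding results, by exhibiting a concrete non-context-free language from the $L_{eq\text{-}k}$ family. First I would fix some $k\ge 2$ and argue that $L_{eq\text{-}k}$ is not context-free. The standard pumping-lemma witness is $a_{1}^{n}a_{2}^{n}\cdots a_{k}^{n}b_{1}^{n}b_{2}^{n}\cdots b_{k}^{n}$ for large $n$: any context-free decomposition $uvxyz$ with $|vxy|\le n$ can touch at most two of the $2k\ge 4$ blocks, so pumping necessarily destroys at least one of the $k$ required count-equalities while preserving the others, contradicting membership in $L_{eq\text{-}k}$.

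Next I would appeal to the Fact stated just above the corollary: for this same $k$, there is a RT-P1BCA recognizing $L_{eq\text{-}k}$ with negative one-sided error bound $\epsilon<\tfrac{1}{2}$. Since negative one-sided error bound $\epsilon$ means that all $w\in L$ are accepted with probability $1$ and all $w\notin L$ are rejected with probability at least $1-\epsilon>\tfrac{1}{2}$, this already qualifies as recognition with (two-sided) bounded error in the sense of the Preliminaries.

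Finally I would invoke the corollary immediately preceding this one, which lifts any such RT-P1BCA recognition to a RT-QFA-IOC recognition with a negative one-sided error bound $\epsilon'$ that can be tuned arbitrarily close to $\epsilon$. Since $\epsilon<\tfrac{1}{2}$, I can choose $\epsilon'<\tfrac{1}{2}$ as well, yielding a RT-QFA-IOC that recognizes the non-context-free language $L_{eq\text{-}k}$ with bounded error, which is exactly the statement of the corollary.

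The only substantive step is the non-context-freeness of $L_{eq\text{-}k}$ for $k\ge 2$; everything else is a citation of the preceding lemma, fact, and corollary. The case $k=1$ must be avoided because $L_{eq\text{-}1}$ is the classical equal-counts language and is context-free, so emphasizing the choice $k\ge 2$ is the only subtlety worth flagging.
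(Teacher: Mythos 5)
Your proposal is correct and follows essentially the same route the paper intends: the corollary is just the chain Fact ($L_{eq\text{-}k}$ by a RT-P1BCA with negative one-sided error $\epsilon<\tfrac{1}{2}$) composed with the immediately preceding corollary (simulation by a RT-QFA-IOC with $\epsilon'$ tunable below $\tfrac{1}{2}$). The only content the paper leaves implicit is the non-context-freeness of $L_{eq\text{-}k}$ for $k\ge 2$, which you supply correctly via the pumping lemma, and you are right to flag that $k=1$ must be excluded since $L_{eq\text{-}1}$ is context-free.
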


We have therefore established that realtime quantum finite automata equipped with a WOM tape are more powerful than plain RT-QFAs, even when the WOM in question is restricted to be just a counter.

$ L_{eq-1} $'s complement, which can of course be recognized with positive one-sided bounded error by a RT-QFA-IOC by the results above, is a deterministic context-free language (DCFL). Using the fact \citep{AGM92} that no nonregular DCFL can be recognized by a nondeterministic TM using $ o(\log(n)) $ space, together with Lemma
\ref{lem:classical-simulated-by-quantum}, we are able to conclude the following.

\begin{corollary}
	\label{cor:QTM-WOMS-superior-PTM-WOMs}
       QTM-WOMs are strictly superior to PTM-WOMs for any space bound $ o(\log(n)) $
       in terms of language recognition with positive one-sided bounded error.
\end{corollary}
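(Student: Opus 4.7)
The plan is to separate QTM-WOMs from PTM-WOMs by exhibiting a concrete witness language, and the natural choice given the preceding results is $L = \overline{L_{eq-1}}$. The inclusion of PTM-WOM recognition inside QTM-WOM recognition at matching resource bounds is immediate from Lemma \ref{lem:classical-simulated-by-quantum}, so the substantive work is to verify that $L$ witnesses a strict separation at space $o(\log n)$ under positive one-sided bounded error.

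For the upper bound on the QTM-WOM side, I would chain three results already in the excerpt. The fact from \citep{Fr79} gives a RT-P1BCA recognizing $L_{eq-1}$ with negative one-sided bounded error; the corollary immediately preceding lifts this to a RT-QFA-IOC recognizing $L_{eq-1}$ with negative one-sided bounded error; swapping the accepting and rejecting state sets of that machine then yields a RT-QFA-IOC recognizing $L = \overline{L_{eq-1}}$ with positive one-sided bounded error. Since any RT-QFA-IOC is a QTM-WOM that uses no read/write work tape (hence constant, and in particular $o(\log n)$, space), this places $L$ inside the target QTM-WOM class.

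For the lower bound, I would first extend the argument of Lemma \ref{wom:lem:classical-wom} from realtime finite automata to arbitrary space-bounded PTMs: because the transitions of a PTM-WOM that affect the non-WOM components of a configuration never depend on the WOM contents, the same configuration-tree homomorphism shows that acceptance probabilities are unchanged when the WOM is erased. Thus every PTM-WOM using $o(\log n)$ space is equivalent to a PTM using $o(\log n)$ space. A PTM recognizing a language with positive one-sided bounded error is in turn simulated by a nondeterministic TM of the same space complexity, because positive one-sided error means that acceptance is equivalent to the existence of at least one accepting computation path. Since $L = \overline{L_{eq-1}}$ is a nonregular DCFL (DCFLs being closed under complement), the result of \citep{AGM92} forbidding any nonregular DCFL from being recognized by an $o(\log n)$-space NTM yields the desired contradiction. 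The only delicate step is the generalization of Lemma \ref{wom:lem:classical-wom} to PTMs equipped with a work tape, and this is where I expect to have to be most careful; the remaining steps are direct invocations of cited results.
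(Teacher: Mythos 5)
Your proposal follows the paper's own argument essentially step for step: the paper also separates the two classes via the witness $\overline{L_{eq-1}}$, recognized with positive one-sided bounded error by a constant-space RT-QFA-IOC (obtained from the \citep{Fr79} fact and the preceding corollary by exchanging accepting and rejecting states), and rules out sublogarithmic-space PTM-WOMs by combining the WOM-removal argument of Lemma \ref{wom:lem:classical-wom} with the \citep{AGM92} result that no nonregular DCFL is recognized by an $o(\log(n))$-space nondeterministic TM, invoking Lemma \ref{lem:classical-simulated-by-quantum} for the containment direction. The details you make explicit (the accept/reject swap, the reduction of a positive one-sided-error PTM to an NTM of the same space bound, and the extension of the WOM-removal homomorphism to space-bounded PTMs) are precisely the steps the paper leaves implicit, so the two arguments coincide.
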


% sssssssssssssssssssssssssssssssssssssssssssssssssssssssssssssssssssssssssssssssss %
% sssssssssssssssssssssssssssssssssssssssssssssssssssssssssssssssssssssssssssssssss %
\subsection*{Unbounded error} \label{wom:ioc:unbounded}
% sssssssssssssssssssssssssssssssssssssssssssssssssssssssssssssssssssssssssssssssss %
% sssssssssssssssssssssssssssssssssssssssssssssssssssssssssssssssssssssssssssssssss %

The simulation method introduced in Lemma \ref{wom:lem:RT-D1BCA-by-RT-QFA-IOC} turns out to be useful in the analysis of the power of increment-only counter machines in the unbounded error mode as well:

\begin{theorem}
	\label{wom:thm:RT-NQ1BCA-by-RT-QFA-IOC-one-sided}
	Any language recognized by a nondeterministic realtime automaton with one blind counter (RT-NQ$ 1 $BCA) is
	recognized by a RT-QFA-IOC with cutpoint $ \frac{1}{2} $.
\end{theorem}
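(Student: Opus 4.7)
The plan is to combine the $m$-way splitting-and-QFT construction from Lemma~\ref{wom:lem:RT-D1BCA-by-RT-QFA-IOC} (extended to quantum machines in Theorem~\ref{wom:thm:RT-Q1BCA-by-RT-QFA-IOC}) with a simple cutpoint-shifting trick. Given a RT-NQ1BCA $\mathcal{N}$ recognizing $L$, write $a_0(w)$ (respectively $a_{\neq 0}(w)$) for the probability that $\mathcal{N}$ ends in an accept state with counter zero (respectively nonzero); the cutpoint-$0$ interpretation of $\mathcal{N}$ gives $w \in L$ iff $a_0(w) > 0$. Applying the Theorem~\ref{wom:thm:RT-Q1BCA-by-RT-QFA-IOC} simulation to $\mathcal{N}$ (with $m$ branches and the final QFT on $\dollar$) yields a RT-QFA-IOC $\mathcal{M}_0$ whose acceptance probability equals $a_0(w) + a_{\neq 0}(w)/m$. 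By the interference analysis already carried out in the proof of Lemma~\ref{wom:lem:RT-D1BCA-by-RT-QFA-IOC}, this quantity is strictly positive exactly when $\mathcal{N}$ has nonzero amplitude in an accepting configuration; in particular it is positive for every $w \in L$.

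To build the final machine $\mathcal{M}$, I would have its $\cent$ transition place the computation in an equi-amplitude superposition of two branches. Branch A enters a designated accept state and remains there for the rest of the input, contributing a flat $1/2$ to the overall acceptance probability. Branch B runs $\mathcal{M}_0$, using up the finite register and the IOC in the same way described there. The combined acceptance probability is
\begin{equation}
f_\mathcal{M}(w) \;=\; \tfrac{1}{2} + \tfrac{1}{2}\bigl(a_0(w) + a_{\neq 0}(w)/m\bigr),
\end{equation}
so $f_\mathcal{M}(w) > 1/2$ whenever $a_0(w) > 0$, i.e.\ for every $w \in L$. The well-formedness bookkeeping (using the finite register to mark the source state of each transition, as in the proof of Lemma~\ref{wom:lem:RT-D1BCA-by-RT-QFA-IOC}) carries over unchanged.

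The main obstacle is the reverse inclusion: for $w \notin L$ we have $a_0(w) = 0$, but $a_{\neq 0}(w)$ may still be strictly positive, in which case the construction above would give $f_\mathcal{M}(w) > 1/2$ and a false positive. I would handle this by first preprocessing $\mathcal{N}$ into an equivalent RT-NQ1BCA $\mathcal{N}'$ satisfying $a_{\neq 0}^{\mathcal{N}'}(w) = 0$ for all $w$, and then feeding $\mathcal{N}'$ into the construction above. The point is that cutpoint-$0$ recognition only cares about the positivity of $a_0$, so we have considerable freedom to redesign the amplitudes of $\mathcal{N}$ as long as the set $\{w : a_0(w) > 0\}$ is preserved. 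Concretely, the preprocessing defers every commitment to an accept state until the $\dollar$ transition, where an additional quantum superposition is applied that destructively interferes away the amplitude on accept-state configurations whose counter has not been returned to zero, while leaving the counter-$0$ accepting amplitudes intact. Designing this final superposition so that it simultaneously (i) kills $a_{\neq 0}$ and (ii) keeps $a_0$ strictly positive exactly on $L$ -- in spite of the blindness of $\mathcal{N}$'s counter -- is the technical heart of the argument, and is precisely where the quantum (as opposed to classical) nature of $\mathcal{N}$ is used.
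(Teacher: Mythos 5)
You correctly set up the simulation and, importantly, you correctly identify the fatal issue: for $w \notin L$ the quantity $a_{\neq 0}(w)$ can be strictly positive, so the naive combination $\tfrac{1}{2} + \tfrac{1}{2}\bigl(a_0(w) + a_{\neq 0}(w)/m\bigr)$ produces false positives. But your proposed repair --- preprocessing $\mathcal{N}$ into an equivalent blind-counter machine $\mathcal{N}'$ whose final $\dollar$-transition destructively interferes away all accepting amplitude sitting on nonzero-counter configurations while preserving the zero-counter accepting amplitude --- is not a gap you have merely left open; it is unachievable. A blind counter automaton's transition on $\dollar$ cannot depend on the counter value, so any cancellation pattern it implements acts identically on configurations that differ only in their counter contents. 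Consider an input on which $\mathcal{N}$ has a single computation path reaching an accept state with counter value $5$: there is nothing for that amplitude to interfere with, and no counter-oblivious unitary step can send it to zero while sending the same amplitude on a counter-$0$ configuration to something nonzero. If such a preprocessing existed, the entire QFT machinery of Lemma \ref{wom:lem:RT-D1BCA-by-RT-QFA-IOC} would be unnecessary and one could recognize these languages exactly rather than with a cutpoint. So the "technical heart" you defer is precisely the impossible part.

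The paper's resolution is different and much simpler: take $m = 2$ and do \emph{not} try to eliminate $a_{\neq 0}$. With $m=2$, each accepting path of $\mathcal{N}$ whose counter is nonzero contributes exactly half its probability to acceptance and half to rejection (no interference in the $2$-way QFT), while zero-counter accepting paths contribute their full probability to acceptance. One then modifies the transitions into reject states on $\dollar$ so that they too split equiprobably into an (accept, reject) pair. Every path except a genuine zero-counter accepting path now contributes exactly half its probability to acceptance, giving $f_{\mathcal{M}}(w) = a_0(w) + \tfrac{1}{2}\bigl(1 - a_0(w)\bigr) = \tfrac{1}{2} + \tfrac{1}{2}a_0(w)$, which exceeds $\tfrac{1}{2}$ exactly when $a_0(w) > 0$, i.e.\ exactly on $L$. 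The point is to make the unavoidable leakage rate of the nonzero-counter accepting paths ($1/m$) coincide with the flat contribution of the rejecting paths, rather than to suppress that leakage. Your two-branch "add a flat $\tfrac12$" device is in the right spirit but applies the boost uniformly to everything, including the leaking paths, which is why it cannot work without the impossible preprocessing.
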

\begin{proof}
	Given a RT-NQ$ 1 $BCA $ \mathcal{N} $, we note that it is just a RT-Q$ 1 $BCA recognizing a language $ L $ with positive one-sided unbounded error \citep{YS10A}, and we can simulate it using the technique described in the proof of Theorem \ref{wom:thm:RT-Q1BCA-by-RT-QFA-IOC}. We set $m$, the number of copies of the RT-Q$ 1 $BCA to be parallelly simulated, to 2. We obtain a RT-QFA-IOC(2) $ \mathcal{M} $ such that
	\begin{enumerate}
		\item paths of $ \mathcal{N} $ that end in an accepting state with the counter equaling zero
			lead  $ \mathcal{M} $ to accept with the same total probability;
		\item paths of $ \mathcal{N} $ that end in an accepting state with a nonzero counter value
			contribute half of their probability to $ \mathcal{M} $'s acceptance probability, with the other half contributing to rejection; and
		\item paths of $ \mathcal{N} $ that end in a reject state
			cause $ \mathcal{M} $ to reject with the same total probability.
	\end{enumerate}
Finally, we modify the transitions on the right end-marker that enter the reject states mentioned in the third case above, so that they are replaced by equiprobable transitions to an (accept,reject) pair of states.
	The resulting machine recognizes $ L $ with ``one-sided" cutpoint $ \frac{1}{2} $, that is, 
	the overall acceptance probability 
	exceeds $ \frac{1}{2} $ for the members of the language, and equals $ \frac{1}{2} $ for the nonmembers.
 \end{proof}

We now present a simulation of a classical model with non-blind counter.

\begin{theorem}
	\label{wom:1-rev-RT-D1CA-by-RT-QFA-IOC}
	If $ L $ is recognized by a realtime deterministic one-reversal one-counter automaton (1-rev-RT-D1CA), then it is recognized by a RT-QFA-IOC with cutpoint $ \frac{1}{2} $.
\end{theorem}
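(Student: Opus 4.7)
The plan is to show that every language recognized by a 1-rev-RT-D1CA $\mathcal{D}$ is also recognized by a RT-NQ1BCA, after which Theorem \ref{wom:thm:RT-NQ1BCA-by-RT-QFA-IOC-one-sided} delivers the desired RT-QFA-IOC. First I would exploit the 1-reversal property to partition the state set $Q$ of $\mathcal{D}$ into $Q_{inc}$ (states visited before the reversal) and $Q_{dec}$ (states visited afterwards); together with the counter status $\Theta \in \{0, 1\}$, this decomposes $\mathcal{D}$'s deterministic computation on any input into at most four consecutive macro-phases appearing in the order $A^{*} B^{*} C^{*} D^{*}$, where $A = (Q_{inc}, \Theta = 0)$, $B = (Q_{inc}, \Theta = 1)$, $C = (Q_{dec}, \Theta = 1)$, and $D = (Q_{dec}, \Theta = 0)$. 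The $\Theta$-value at each step is determined by the current macro-phase, and the only place where the next macro-phase is not determined by $\mathcal{D}$'s state transition alone is the decrement step that drives the counter to zero.

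Next I would construct a RT-NQ1BCA $\mathcal{N}$ whose internal state is a pair $(q, p) \in Q \times \{A, B, C, D\}$. At each step, $\mathcal{N}$ applies $\mathcal{D}$'s transition using the $\Theta$-value implied by $p$, updates its blind counter in lockstep with $\mathcal{D}$'s counter update, and advances $p$ deterministically except when $\mathcal{D}$ decrements in Phase $B$ or $C$, at which point $\mathcal{N}$ uses a quantum branch to guess whether the resulting counter value is zero (moving to Phase $D$) or still positive (remaining in or entering Phase $C$). Independently, at each step, $\mathcal{N}$ has a second quantum branch for whether to apply an extra $-1$ ``ghost'' adjustment to the blind counter without altering the simulated state; this is what allows accepting computations that end in Phases $B$ or $C$ with a positive counter value to still reach blind counter $0$, by applying exactly as many ghost adjustments as $\mathcal{D}$'s final counter value. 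For any $w \in L(\mathcal{D})$, the computational path that makes the correct phase guesses and the correct number of ghost adjustments ends in an accepting state with blind counter $0$, so $\mathcal{N}$ accepts with positive amplitude; Theorem \ref{wom:thm:RT-NQ1BCA-by-RT-QFA-IOC-one-sided} then yields the RT-QFA-IOC recognizing $L(\mathcal{D})$ with cutpoint $\frac{1}{2}$.

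The main obstacle is ruling out spurious acceptance by incorrect guesses. A premature $\cdot \to D$ guess freezes all further counter updates in the simulation (since the 1-reversal property together with non-negativity of the counter forces $\mathcal{D}$'s Phase-$D$ transitions to leave the counter unchanged), so the blind counter gets trapped at a positive value that then would have to be matched exactly by ghost adjustments while the diverging simulated state trajectory coincidentally remains accepting; and a delayed $\cdot \to D$ guess forces $\mathcal{N}$ to apply $\Theta = 1$ transitions at steps where $\mathcal{D}$'s true $\Theta$ is $0$, generally yielding a different final state. The bulk of the full proof would consist of verifying, case by case on the structure of $\mathcal{D}$, that these two failure modes together rule out every incorrect guess leading to acceptance for a nonmember of $L(\mathcal{D})$.
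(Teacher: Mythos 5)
Your reduction to a RT-NQ1BCA is unsound as described, and the ``delayed guess'' failure mode you wave off is exactly where it breaks. Take $L=\{a^{m}b^{n} \mid m>n\geq 1\}$ with the obvious 1-rev-RT-D1CA $\mathcal{D}$ (increment on $a$, decrement on $b$ while $\Theta=1$, enter a dead reject state on a $b$ seen with $\Theta=0$, accept at $\dollar$ iff $\Theta=1$). On input $a^{3}b^{3}\notin L$, the path of your $\mathcal{N}$ that simply never guesses the transition to Phase $D$ applies $\mathcal{D}$'s $\Theta=1$ transitions throughout: it performs three increments and three mandatory decrements, needs no ghost adjustments, ends with blind counter $0$, and at $\dollar$ follows the $\Theta=1$ branch into an accept state. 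Since a RT-NQ1BCA accepts whenever some path accepts in an accept state with counter zero, $\mathcal{N}$ accepts $a^{3}b^{3}$. The ``different final state'' produced by a delayed guess can perfectly well be an accepting one, and the blind counter cannot detect the error because the pretended updates happen to sum to zero. The ghost decrements create a symmetric problem for premature guesses: they let a path that entered Phase $D$ with a positive frozen counter still pass the final zero test, so the blind counter no longer certifies that the guessed zero-crossing was genuine. A blind counter inspected only once at the end verifies an aggregate equality, not the timing of the zero crossing, and your construction needs the timing.

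The paper avoids this entirely by not going through nondeterministic acceptance. It builds the RT-QFA-IOC directly: two pre-zero paths increment by complementary amounts so that their counters agree exactly when $\mathcal{D}$'s counter is zero, and at every step after the first decrement they shed a QFT pair of post-zero paths ($\mathsf{path}_{3}$ deciding like $\mathcal{D}$, $\mathsf{path}_{4}$ accepting iff $\mathcal{D}$ rejects). Pairs created at wrong moments fail to interfere and contribute equally to acceptance and rejection, so they are invisible at cutpoint $\frac{1}{2}$; only the pair created at the true zero-crossing interferes constructively and tips the balance the way the correct simulation dictates. Wrong-timing paths are thus neutralized by cancellation rather than having to be eliminated outright, which is precisely what your acceptance condition cannot do. To salvage your route you would at minimum have to prove the inclusion of 1-rev-RT-D1CA languages in the RT-N1BCA languages separately, and the paper's decision to give a bespoke construction for Theorem \ref{wom:1-rev-RT-D1CA-by-RT-QFA-IOC} rather than derive it from Theorem \ref{wom:thm:RT-NQ1BCA-by-RT-QFA-IOC-one-sided} should warn you that this inclusion is at best nontrivial.
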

\begin{proof}
	We assume that the 1-rev-RT-D1CA $ \mathcal{D} = (Q,\Sigma,\delta,q_{1},Q_{a}) $ recognizing $ L $
	is in the following canonical form:
	\begin{itemize}		
		\item the counter value of $ \mathcal{D} $ never becomes nonnegative;
		\item the transition on $ \cent $ does not make any change 
			($ \delta(q_{1},\cent,0,q_{1})=1 $, and $ D_{c}(q_{1})=0 $);
		\item $ Q $ is the union of two disjoint subsets $ Q_{1} $ and $ Q_{2} $, i.e.
			\begin{enumerate}
				\item until the first decrement, the status of the counter is never checked
					-- this part is implemented by the members of $ Q_{1} $,
				\item during the first decrement, the internal state of $ \mathcal{D} $ switches to one of 
					the members of $ Q_{2} $, and
				\item the computation after the first decrement is implemented by the members of $ Q_{2} $;
			\end{enumerate}
		\item once the counter value is detected as zero, the status of the counter is not checked again.
	\end{itemize}	
	We will construct a RT-QFA-IOC $ \mathcal{M} = (Q',\Sigma,\Omega,\delta',q_{1},Q'_{a}) $, 
	to recognize $ L $ with cutpoint $ \frac{1}{2} $,
	where 
	\begin{itemize}
		\item $ Q' = \{q_{1}\} \cup \{q_{j,i} \mid j \in \{1,\ldots,4\}, i \in \{1,\ldots,|Q|\}  \}, $
		\item $ Q'_{a} = \{ q_{j,i} \mid j \in \{1,2,3\}, q_{i} \in Q_{a} \} \cup
			\{ q_{4,i} \mid q_{i} \in Q_{r} \} $, and
		\item $ \Omega =  \{  \omega_{i} \cup \omega'_{i} \mid i \in \{1,\ldots,|Q|\} \} $.
	\end{itemize}
and the details of $ \delta' $ are given in Figures \ref{wom:fig:transition-1-rev-RT-D1CA-by-RT-QFA-IOC-1}
	and \ref{wom:fig:transition-1-rev-RT-D1CA-by-RT-QFA-IOC-2}.
	
	\begin{figure}[h!]
	\caption{The details of the transition function of the RT-QFA-IOC presented in the proof of 
		Theorem \ref{wom:1-rev-RT-D1CA-by-RT-QFA-IOC} (I)}
	\centering
	\fbox{
	\begin{minipage}{0.95\textwidth}
		\footnotesize
		In the following, ``$ * $" means that the corresponding transition does not depend on 
		the status of the counter.	
		\\
		(i) On symbol $ \cent $:
		\begin{equation}
			\delta'(q_{1},\cent) = \underbrace{ \frac{1}{\sqrt{2}} (q_{1,1},0,\omega_{1}) }_{\mathsf{path}_{1}} + 
				\underbrace{ \frac{1}{\sqrt{2}} (q_{2,1},0,\omega_{1}) }_{\mathsf{path}_{2}} 
		\end{equation}
		(ii) On symbol $ \sigma \in \Sigma $: for each $ q_{i} \in Q_{1} $, 
		if $ \delta(q_{i},\sigma,*,q_{j}) = 1 $ and $ q_{j} \in Q_{1} $, then 
		\begin{equation}
			\begin{array}{lcl}
				\mathsf{path}_{1}: \delta'(q_{1,i},\sigma) & = & 
					\underbrace{(q_{1,j},D_{c}(q_{j}),\omega_{i})}_{\mathsf{path}_{1}}
					\\
				\mathsf{path}_{2}: \delta'(q_{2,i},\sigma) & = & 
					\underbrace{(q_{2,j},0,\omega_{i})}_{\mathsf{path}_{2}}		
			\end{array}.
		\end{equation}
		(iii) On symbol $ \dollar $: for each $ q_{i} \in Q_{1} $, 
			if $ \delta(q_{i},\dollar,*,q_{j}) = 1 $ and $ q_{j} \in Q_{1} $, then
		\begin{equation}
			\begin{array}{lcl}
				\mathsf{path}_{1}: \delta'(q_{1,i},\sigma) & = & 
					\underbrace{(q_{1,j},0,\omega_{i})}_{\mathsf{path}_{1}}  \\
				\mathsf{path}_{2}: \delta'(q_{2,i},\sigma) & = & 
					\underbrace{(q_{2,j},0,\omega_{i})}_{\mathsf{path}_{2}}
			\end{array}.
		\end{equation}
		\end{minipage}
		}
		\label{wom:fig:transition-1-rev-RT-D1CA-by-RT-QFA-IOC-1}
	\end{figure}
	
	\begin{figure}[h!]
	\caption{The details of the transition function of the RT-QFA-IOC presented in the proof of 
		Theorem \ref{wom:1-rev-RT-D1CA-by-RT-QFA-IOC} (II)}
	\centering
	\fbox{
	\begin{minipage}{0.95\textwidth}
		\footnotesize		
		(iv) On symbol $ \sigma \in \Sigma $: for each $ q_{i} \in Q $, 
		if $ \delta(q_{i},\sigma,1,q_{j}) = 1 $ and $ q_{j} \in Q_{2} $, then 
		\begin{equation}
			\begin{array}{lcl}
				\mathsf{path}_{1}: \delta'(q_{1,i},\sigma) & = & 
					\underbrace{\frac{1}{\sqrt{3}}(q_{1,j},0,\omega_{i})}_{\mathsf{path}_{1}} +
					\underbrace{\frac{1}{\sqrt{3}}(q_{3,j},0,\omega'_{i})}_{\mathsf{path}_{3}} + 
					\underbrace{\frac{1}{\sqrt{3}}(q_{4,j},0,\omega'_{i})}_{\mathsf{path}_{4}}  \\
				\mathsf{path}_{2}: \delta'(q_{2,i},\sigma) & = & 
					\underbrace{\frac{1}{\sqrt{3}}(q_{2,j},c_{2},\omega_{i})}_{\mathsf{path}_{2}} +
					\underbrace{\frac{1}{\sqrt{3}}(q_{3,j},c_{2},\omega'_{i})}_{\mathsf{path}_{3}} - 
					\underbrace{\frac{1}{\sqrt{3}}(q_{4,j},c_{2},\omega'_{i})}_{\mathsf{path}_{4}} 
			\end{array},
		\end{equation}
		and if $ \delta(q_{i},\sigma,0,q_{j}) = 1 $, then
		\begin{equation}
			\begin{array}{lcl}
				\mathsf{path}_{3}: \delta'(q_{3,i},\sigma) & = & 
					\underbrace{(q_{3,j},0,\omega_{i})}_{\mathsf{path}_{3}} \\
				\mathsf{path}_{4}: \delta'(q_{4,i},\sigma) & = & 
					\underbrace{(q_{4,j},0,\omega_{i})}_{\mathsf{path}_{4}} \\
			\end{array},
		\end{equation}
		where $ c_{2}=1 $ only if $ D_{c}(q_{j})=-1 $.
		\\
		(iii) On symbol $ \dollar $: for each $ q_{i} \in Q $, if $ \delta(q_{i},\dollar,1,q_{j}) = 1 $
		and $ q_{j} \in Q_{2} $, then
		\begin{equation}
			\begin{array}{lcl}
				\mathsf{path}_{1}: \delta'(q_{1,i},\sigma) & = & 
					\underbrace{\frac{1}{\sqrt{3}}(q_{1,j},0,\omega_{i})}_{\mathsf{path}_{1}} +
					\underbrace{\frac{1}{\sqrt{3}}(q_{3,j},0,\omega_{i})}_{\mathsf{path}_{3}} + 
					\underbrace{\frac{1}{\sqrt{3}}(q_{4,j},0,\omega_{i})}_{\mathsf{path}_{4}}  \\
				\mathsf{path}_{2}: \delta'(q_{2,i},\sigma) & = & 
					\underbrace{\frac{1}{\sqrt{3}}(q_{2,j},c_{2},\omega_{i})}_{\mathsf{path}_{2}} +
					\underbrace{\frac{1}{\sqrt{3}}(q_{3,j},c_{2},\omega_{i})}_{\mathsf{path}_{3}} - 
					\underbrace{\frac{1}{\sqrt{3}}(q_{4,j},c_{2},\omega_{i})}_{\mathsf{path}_{4}} 
			\end{array},
		\end{equation}		
		and if $ \delta(q_{i},\dollar,0,q_{j}) = 1 $, then
		\begin{equation}
			\begin{array}{lcl}
				\mathsf{path}_{3}: \delta'(q_{3,i},\sigma) & = & 
					\underbrace{(q_{3,j},0,\omega_{i})}_{\mathsf{path}_{3}} \\
					\mathsf{path}_{4}: \delta'(q_{4,i},\sigma) & = & 
						\underbrace{(q_{4,j},0,\omega_{i})}_{\mathsf{path}_{4}} \\
				\end{array},
			\end{equation}
			where $ c_{2}=1 $ only if $ D_{c}(q_{j})=-1 $.
		\end{minipage}
		}
		\label{wom:fig:transition-1-rev-RT-D1CA-by-RT-QFA-IOC-2}
	\end{figure}
	
	$ \mathcal{M} $ starts by branching to two paths, $ \mathsf{path}_{1} $ and  $ \mathsf{path}_{2} $, 
	with equal amplitude. These paths simulate 
	$ \mathcal{D} $ in parallel according to the specifications in Figure 
	\ref{wom:fig:transition-1-rev-RT-D1CA-by-RT-QFA-IOC-1} until $ \mathcal{D} $ decrements its counter 
	for the first time. From that step on, $ \mathsf{path}_{1} $ and  $ \mathsf{path}_{2} $ split further 
	to create new offshoots (called $ \mathsf{path}_{3} $ and  $ \mathsf{path}_{4} $,) on every symbol 
	until the end of the computation, as seen in Figure \ref{wom:fig:transition-1-rev-RT-D1CA-by-RT-QFA-IOC-2}.
	Throughout the computation, $ \mathsf{path}_{1} $ (resp., $ \mathsf{path}_{2} $) increments its counter 
	whenever $ \mathcal{D} $ is supposed to increment (resp., decrement) its counter. Since $ \mathcal{M} $'s 
	counter is write-only, it has no way of determining which transition $ \mathcal{D} $ will make depending on its 
	counter sign. This problem is solved by assigning different paths of $ \mathcal{M} $ to these branchings of 
	$ \mathcal{D} $:  $ \mathsf{path}_{1} $ and  $ \mathsf{path}_{2} $ (the ``pre-zero paths") always assume that $ \mathcal{D} $'s counter 
	has not returned to zero yet by being decremented, whereas $ \mathsf{path}_{3} $s and  $ \mathsf{path}_{4} $s (the ``post-zero paths") 
	carry out their simulations by assuming otherwise. Except for $ \mathsf{path}_{4} $s, all paths imitate $ 
	\mathcal{D} $'s decision at the end of the computation. $ \mathsf{path}_{4} $s, on the other hand, accept if 
	and only if their simulation of $ \mathcal{D} $ rejects the input.

	If $ \mathcal{D} $ never decrements its counter, 
	$ \mathcal{M} $ ends up with the same decision as $ \mathcal{D} $ with probability 1.
	We now focus on the other cases.
As seen in Figure \ref{wom:fig:transition-1-rev-RT-D1CA-by-RT-QFA-IOC-2}, the pre-zero paths lose some of their amplitude on each symbol in this stage by performing a QFT to a new pair of post-zero paths. The outcome of this transformation depends on the status of $ \mathcal{D} $'s counter at this point in the simulation by the pre-zero paths:
	\begin{itemize}
		\item If $ \mathcal{D} $'s counter has not yet returned to zero, then  $ \mathsf{path}_{2} $'s counter has a smaller value than $ \mathsf{path}_{1} $'s counter, and so they cannot interfere via the QFT.
			The newly created post-zero paths will contribute equal amounts to the acceptance and rejection probabilities at the end of the computation.
		\item If $ \mathsf{path}_{1} $ and $ \mathsf{path}_{2} $ have the same counter value as a result of this transition, this indicates that $ \mathcal{D} $ has performed exactly as many decrements as its previous increments, and its counter is therefore zero. The paths interfere, the target $ \mathsf{path}_{4} $'s cancel each other, and $ \mathsf{path}_{3} $ survives after the QFT with a probability that is twice
			that of the total probability of the ongoing pre-zero paths. 			
\end{itemize}
	As a result, it is guaranteed that the path that is carrying out the correct simulation of $ \mathcal{D} $ will dominate $ \mathcal{M} $'s decision at the end of the computation: If $ \mathcal{D} $'s counter ever returns to zero, the $ \mathsf{path}_{3} $ that is created at the moment of that last decrement will have sufficient probability to tip the accept/reject balance. If $ \mathcal{D} $'s counter never returns to zero, then the common decision by the pre-zero paths on the right end-marker will determine whether the overall acceptance or the rejection probability will be greater than $ \frac{1}{2} $.
 \end{proof}

Consider the following language \citep{NH71}:

\begin{equation}
	\footnotesize
	L_{NH} = \{a^{x}ba^{y_{1}}ba^{y_{2}}b \cdots a^{y_{t}}b \mid x,t,y_{1}, \cdots, y_{t}
		\in \mathbb{Z}^{+} \mbox{ and } \exists k ~ (1 \le k \le t), x=\sum_{i=1}^{k}y_{i} \}
\end{equation}

$ L_{NH} $ is recognizable by both 1-rev-RT-D1CAs and RT-N1BCAs\endnote{RT-N1BCAs can also recognize
$ L_{center} = \{ ubv \mid u,v \in \{a,b\}^{*},|u|=|v| \} $,
and the languages studied in \citep{FYS10A}, none of which can be recognized by RT-QFAs with unbounded error.} (and so RT-NQ1BCAs).
It is known \citep{NH71,FK94,LQ08,YS10C} that neither a RT-QFA nor a 
$o(\log(\log(n)))$-space PTM can recognize $ L_{NH} $ with unbounded error. 
We therefore have the following corollary.

\begin{corollary}
	\label{wom:cor:QTM-WOM-superior-PTM-WOM}
       QTM-WOMs are strictly superior to PTM-WOMs for any space bound $ o(\log(\log(n))) $
       in terms of language recognition with unbounded error.
\end{corollary}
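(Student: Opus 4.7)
The plan is to use the language $L_{NH}$ as a witness that separates the two classes. The strategy has three ingredients: a positive result showing $L_{NH}$ lies in the quantum side, a negative result ruling it out of the classical side, and a containment observation establishing that the quantum side subsumes the classical one.

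First I would establish the positive side. The preceding text notes (citing Nasu-Honda) that $L_{NH}$ is recognized by a 1-rev-RT-D1CA. By Theorem \ref{wom:1-rev-RT-D1CA-by-RT-QFA-IOC}, this implies $L_{NH}$ is recognized by a RT-QFA-IOC with cutpoint $\frac{1}{2}$, hence with unbounded error. Since a RT-QFA-IOC is a special case of a QTM-WOM (realtime, constant read/write space, with the WOM restricted to an increment-only counter), it follows that $L_{NH}$ is recognized with unbounded error by a QTM-WOM using zero read/write workspace, which trivially fits any space bound $o(\log\log n)$.

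Next I would argue the negative side. The remark immediately before the corollary states that no $o(\log\log n)$-space PTM can recognize $L_{NH}$ with unbounded error. To transfer this to PTM-WOMs, I would invoke the generalization of Lemma \ref{wom:lem:classical-wom} to Turing machines with a sublogarithmic work tape: because a PTM-WOM never reads from its WOM tape, the WOM contents cannot influence which transitions have nonzero probability in any branch, and the same homomorphism argument used in that lemma's proof (mapping each augmented configuration to its underlying non-WOM configuration) shows that the acceptance probability of any input is identical to that of the corresponding WOM-less PTM obtained by dropping all WOM actions. Hence the class of languages recognized with unbounded error by $o(\log\log n)$-space PTM-WOMs coincides with that of $o(\log\log n)$-space PTMs, and $L_{NH}$ lies outside.

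Finally, to conclude that the superiority is strict rather than merely incomparable, I would note that any $s(n)$-space PTM-WOM is simulated by an $s(n)$-space QTM-WOM with identical acceptance probabilities via Lemma \ref{lem:classical-simulated-by-quantum}, so the class on the quantum side contains the class on the classical side. Combining the three ingredients yields the strict separation. I expect no genuine obstacle here; the only point requiring care is the extension of Lemma \ref{wom:lem:classical-wom} from realtime finite automata to space-bounded PTMs, but the argument is the same and the informal discussion preceding that lemma already asserts it for general PTM-WOMs.
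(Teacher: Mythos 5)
Your proposal is correct and follows essentially the same route as the paper: the paper derives the corollary directly from the observations that $L_{NH}$ is recognizable by a 1-rev-RT-D1CA (and by a RT-NQ1BCA), hence by a RT-QFA-IOC with unbounded error via Theorem \ref{wom:1-rev-RT-D1CA-by-RT-QFA-IOC}, while no $o(\log(\log(n)))$-space PTM can recognize it with unbounded error, and the WOM-removal argument plus Lemma \ref{lem:classical-simulated-by-quantum} complete the separation exactly as you describe. Your explicit attention to extending Lemma \ref{wom:lem:classical-wom} from finite automata to space-bounded PTM-WOMs is the same point the paper handles informally in the discussion preceding that lemma.
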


% SSSSSSSSSSSSSSSSSSSSSSSSSSSSSSSSSSSSSSSSSSSSSSSSSSSSSSSSSSSSSSSSSSSSSSSSSSSSSSSS %
% SSSSSSSSSSSSSSSSSSSSSSSSSSSSSSSSSSSSSSSSSSSSSSSSSSSSSSSSSSSSSSSSSSSSSSSSSSSSSSSS %
% SSSSSSSSSSSSSSSSSSSSSSSSSSSSSSSSSSSSSSSSSSSSSSSSSSSSSSSSSSSSSSSSSSSSSSSSSSSSSSSS %
\section*{Machines with push-only stack}
% SSSSSSSSSSSSSSSSSSSSSSSSSSSSSSSSSSSSSSSSSSSSSSSSSSSSSSSSSSSSSSSSSSSSSSSSSSSSSSSS %
% SSSSSSSSSSSSSSSSSSSSSSSSSSSSSSSSSSSSSSSSSSSSSSSSSSSSSSSSSSSSSSSSSSSSSSSSSSSSSSSS %
% SSSSSSSSSSSSSSSSSSSSSSSSSSSSSSSSSSSSSSSSSSSSSSSSSSSSSSSSSSSSSSSSSSSSSSSSSSSSSSSS %

We conjecture that allowing more than one nonblank/nonempty symbol in
the WOM tape alphabet of a QFA increases its computational power. We
consider, in particular, the language $ L_{twin}=\{wcw \mid w \in
\{a,b\}^{*} \} $:

\begin{theorem}
       \label{wom:thm:Ltwin-by-RT-QFA-POS}
       There exists a RT-QFA-POS that recognizes the language $ L_{twin} $
       with negative one-sided error bound $ \frac{1}{2} $.
\end{theorem}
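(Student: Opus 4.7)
The plan is to design a RT-QFA-POS $\mathcal{M}$ that splits into two computational paths at the start, each writing to the POS in a complementary pattern, and then reunites them by a two-way QFT on the right end-marker. The guiding principle, articulated in Figure \ref{fig:wom:N-way-QFT}, is that two paths interfere in a QFT only when their configurations coincide on all components other than the internal state being transformed, in particular, on the POS content. By arranging the writing so that the two POS contents are equal precisely when the input has the form $wcw$, the QFT will accept every member of $L_{twin}$ with probability $1$, while non-members that are structurally well formed are accepted with probability exactly $\frac{1}{2}$.

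Concretely, on $\cent$ the machine would split with equal amplitude $\frac{1}{\sqrt{2}}$ into $\mathsf{path}_1$ and $\mathsf{path}_2$. Both paths operate in two internal modes, pre-$c$ and post-$c$. In the pre-$c$ mode, on each symbol in $\{a,b\}$, $\mathsf{path}_1$ pushes the scanned symbol onto the POS while $\mathsf{path}_2$ pushes $\varepsilon$; on the separator $c$ both paths push $\varepsilon$ and switch to the post-$c$ mode, after which the roles are reversed. If a second $c$ is ever seen, or if $\dollar$ is reached while still in the pre-$c$ mode, both paths transit deterministically to a common rejecting state. On $\dollar$ from the post-$c$ mode, the two paths execute the two-way QFT of Figure \ref{fig:wom:N-way-QFT}, mapping $\mathsf{path}_1 \to \frac{1}{\sqrt{2}}(-r_1+r_2)$ and $\mathsf{path}_2 \to \frac{1}{\sqrt{2}}(r_1+r_2)$, where $r_2$ is an accepting state and $r_1$ is a rejecting state, writing $\varepsilon$ to the POS in all these final transitions so that the pre-QFT POS contents are preserved.

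The analysis then splits into three cases. If the input is $wcw \in L_{twin}$, both paths end with POS content exactly $w$, so the $r_1$ amplitudes cancel under the QFT and all probability is gathered on $r_2$, giving acceptance probability $1$. If the input has the form $w_1 c w_2$ with $w_1 \neq w_2$ as strings (which subsumes the length-mismatch case, since the two POS contents then differ as elements of $\Gamma^{*}$), the two paths are in distinct POS basis states before the QFT, so no interference occurs and each of $r_1$ and $r_2$ collects total probability $\frac{1}{4} + \frac{1}{4} = \frac{1}{2}$, yielding acceptance probability exactly $\frac{1}{2}$. Inputs that violate the $\{a,b\}^{*} c \{a,b\}^{*}$ template are rejected with probability $1$ through the dedicated rejecting state.

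The main technical obstacle I anticipate is verifying well-formedness of $\delta$, i.e., orthonormality of the columns of the associated transition matrix. Away from the initial split and the final QFT, $\mathsf{path}_1$ and $\mathsf{path}_2$ occupy disjoint sets of internal states and act deterministically on each symbol, so orthogonality is immediate. At the final QFT, the two source configurations (identical in POS content but differing in internal state $d_1$ versus $d_2$) map to the two orthogonal output vectors $\frac{1}{\sqrt{2}}(-r_1+r_2)$ and $\frac{1}{\sqrt{2}}(r_1+r_2)$, and sources with different POS contents have disjoint destination sets. As in the proof of Lemma \ref{wom:lem:RT-D1BCA-by-RT-QFA-IOC}, the register alphabet $\Omega$ can be used to write a symbol encoding the source state of each transition, eliminating any residual column collisions and securing the well-formedness condition.
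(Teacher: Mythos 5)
Your construction is correct and is essentially the same as the paper's: split into two equal-amplitude paths, have one copy the prefix $w_1$ and the other copy the suffix $w_2$ onto the push-only stack, and merge them with a two-way QFT on $\dollar$ so that interference occurs exactly when the stack contents agree, giving acceptance probability $1$ for members and $\frac{1}{2}$ for well-formed non-members. Your remarks on well-formedness (using the register to tag source states) also match the paper's technique.
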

\begin{proof}
       We construct a RT-QFA-POS $
\mathcal{M}=(Q,\Sigma,\Gamma,\Omega,\delta,q_{1},Q_{a}) $,
       where $ Q=\{q_{1},q_{2},q_{3},p_{1},p_{2},p_{3}\} $, $ Q_{a} = \{q_{2}\} $,
       $ \Omega=\{ \omega_{1},\omega_{2} \} $, and $ \Gamma=\{\#,a,b,\varepsilon\} $.
       The transition details are shown in Figure \ref{wom:fiq:Ltwin}.
       
       \begin{figure}[h!]
               \caption{The transitions of the RT-QFA-POS of Theorem
\ref{wom:thm:Ltwin-by-RT-QFA-POS}}
               \centering
               \fbox{
               \begin{minipage}{0.85\textwidth}
               \small
               On symbol $ \cent $:
               \begin{equation}
                       \delta(q_{1},\cent) = \underbrace{ \frac{1}{\sqrt{2}}
(q_{1},\varepsilon,\omega_{1}) }_{\mathsf{path}_{1}} +
                               \underbrace{\frac{1}{\sqrt{2}} (p_{1},\varepsilon,\omega_{1})}_{\mathsf{path}_{2}}
               \end{equation}
               On symbols from $ \Sigma $:
               \begin{equation}
                       \mathsf{path}_{1}: \left\{
                               \begin{array}{lcl}
                                       \delta(q_{1},a) & = & (q_{1},a,\omega_{1}) \\
                                       \delta(q_{2},a) & = & (q_{2},\varepsilon,\omega_{1}) \\
                                       \delta(q_{1},b) & = & (q_{1},b,\omega_{1}) \\
                                       \delta(q_{2},b) & = & (q_{2},\varepsilon,\omega_{1}) \\
                                       \delta(q_{1},c) & = & (q_{2},\varepsilon,\omega_{1}) \\
                                       \delta(q_{2},c) & = & (q_{3},\varepsilon,\omega_{1}) \\
                                       \delta(q_{3},a) & = & (q_{3},\varepsilon,\omega_{2}) \\
                                       \delta(q_{3},b) & = & (q_{3},\varepsilon,\omega_{2}) \\
                                       \delta(q_{3},c) & = & (q_{3},\varepsilon,\omega_{2})
                               \end{array}
                       \right.
               \end{equation}
               \begin{equation}
                       \mathsf{path}_{2}: \left\{
                               \begin{array}{lcl}
                                       \delta(p_{1},a) & = & (p_{1},\varepsilon,\omega_{1}) \\
                                       \delta(p_{2},a) & = & (p_{2},a,\omega_{1}) \\
                                       \delta(p_{1},b) & = & (p_{1},\varepsilon,\omega_{1}) \\
                                       \delta(p_{2},b) & = & (p_{2},b,\omega_{1}) \\
                                       \delta(p_{1},c) & = & (p_{2},\varepsilon,\omega_{1}) \\
                                       \delta(p_{2},c) & = & (p_{3},\varepsilon,\omega_{1}) \\
                                       \delta(p_{3},a) & = & (p_{3},\varepsilon,\omega_{2}) \\
                                       \delta(p_{3},b) & = & (p_{3},\varepsilon,\omega_{2}) \\
                                       \delta(p_{3},c) & = & (p_{3},\varepsilon,\omega_{2})                                                                              
                               \end{array}
                       \right.
               \end{equation}
               On symbol $ \dollar $:
               \begin{equation}
                       \mathsf{path}_{1}: \left\{
                               \begin{array}{lcl}
                                       \delta(q_{1},\dollar) & = & (q_{1},\varepsilon,\omega_{1}) \\
                                       \delta(q_{2},\dollar) & = & \frac{1}{\sqrt{2}} (q_{2},\varepsilon,\omega_{1}) +
                                               \frac{1}{\sqrt{2}} (q_{3},\varepsilon,\omega_{2}) \\
                                       \delta(q_{3},\dollar) & = & (q_{3},\varepsilon,\omega_{1})
                               \end{array}
                       \right.
               \end{equation}
               \begin{equation}
                       \mathsf{path}_{2}: \left\{
                               \begin{array}{lcl}
                                       \delta(p_{1},\dollar) & = & (p_{1},\varepsilon,\omega_{1}) \\
                                       \delta(p_{2},\dollar) & = & \frac{1}{\sqrt{2}} (q_{2},\varepsilon,\omega_{1}) -
                                               \frac{1}{\sqrt{2}} (q_{3},\varepsilon,\omega_{2}) \\
                                        \delta(p_{3},\dollar) & = & (q_{3},\varepsilon,\omega_{1})
                               \end{array}
                       \right.
               \end{equation}
       \end{minipage}
       }
       \label{wom:fiq:Ltwin}
       \end{figure}

       \begin{enumerate}
               \item The computation splits into two paths, $ \mathsf{path}_{1} $
and $ \mathsf{path}_{2} $, with equal
                       amplitude at the beginning.
               \item $ \mathsf{path}_{1} $ (resp., $ \mathsf{path}_{2} $) scans the
input, and
                       copies $ w_{1} $ (resp., $ w_{2} $) to the POS if the input is of
the form $ w_{1} c w_{2} $,
                       where $ w_{1},w_{2} \in \{a,b\}^{*} $.
                       \begin{enumerate}
                               \item If the input is not of the form $ w_{1} c w_{2} $, both paths reject.
                               \item Otherwise, $ \mathsf{path}_{1} $
                                       and $ \mathsf{path}_{2} $ perform a QFT at the end of the
computation, where the distinguished range element is an accept state.
               \end{enumerate}
\end{enumerate}
The configurations at the ends of $ \mathsf{path}_{1} $ and $
\mathsf{path}_{2} $ interfere with each other,
i.e., the machine accepts with probability $ 1 $, if and only if the
input is of the form
$ w c w $, $ w \in \{a,b\}^{*} $.
Otherwise, each of $ \mathsf{path}_{1} $ and $ \mathsf{path}_{2} $
contributes at most $ \frac{1}{4} $ to the overall acceptance probability,
and the machine accepts with probability at most $ \frac{1}{2} $.
\end{proof}

\begin{lemma}
	\label{wom:lem:no-ptm-wom-L-twin}
        No PTM (or PTM-WOM) using $ o(\log(n)) $ space can recognize $
L_{twin} $ with bounded error.
\end{lemma}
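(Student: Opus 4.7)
The plan is to reduce the claim to showing that $ L_{twin} $ cannot be recognized by a plain PTM with $ o(\log n) $ space and bounded error; the PTM-WOM case follows from the same homomorphism-of-computation-trees argument used in Lemma \ref{wom:lem:classical-wom}, which extends verbatim from realtime finite automata to space-bounded PTMs. Indeed, because a PTM never reads its WOM, the marginal distribution on the non-WOM components of its configuration at each step is the same as if the machine had never used the WOM at all, so bounded error in one setting is equivalent to bounded error in the other.

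For the PTM lower bound I would follow the ``signature'' / crossing-behavior technique of Dwork and Stockmeyer. Suppose for contradiction that a PTM $ \mathcal{M} $ with $ s(n) = o(\log n) $ space recognizes $ L_{twin} $ with error $ \epsilon < 1/2 $, and consider inputs of the form $ wcw' $ with $ |w| = |w'| = n $. Let $ K(n) = 2^{O(s(n))} $ upper-bound the number of distinct boundary configurations (internal state together with work-tape contents and work-head position) available to $ \mathcal{M} $. The full probabilistic behavior of $ \mathcal{M} $ on such inputs factors through a return-probability matrix at the boundary between $ wc $ and $ w' $: the acceptance probability can be written as a bilinear form in a ``prefix signature'' $ \pi(w) $ and a ``suffix signature'' $ \rho(w') $, both of dimension polynomial in $ K(n) $. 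This yields a Lipschitz-type bound
\begin{equation}
    | f_{\mathcal{M}}(w_{1} c w') - f_{\mathcal{M}}(w_{2} c w') | \le C \cdot \| \pi(w_{1}) - \pi(w_{2}) \|
\end{equation}
for a fixed constant $ C $ independent of $ w' $.

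The pigeonhole step then closes the argument. The prefix signatures lie in a bounded region of a space of dimension polynomial in $ K(n) $, so for any fixed $ \eta > 0 $ this region can be covered by at most $ 2^{\mathrm{poly}(s(n)) \log(1/\eta)} $ balls of radius $ \eta $, and this quantity is $ o(2^{n}) $ whenever $ s(n) = o(\log n) $. Choosing $ \eta < (1 - 2\epsilon)/C $ and applying pigeonhole, there exist distinct $ w_{1}, w_{2} \in \{a,b\}^{n} $ with $ \| \pi(w_{1}) - \pi(w_{2}) \| < \eta $. But then $ f_{\mathcal{M}}(w_{1} c w_{1}) $ and $ f_{\mathcal{M}}(w_{2} c w_{1}) $ differ by less than $ 1 - 2\epsilon $, contradicting that $ w_{1} c w_{1} \in L_{twin} $ must be accepted with probability $ \ge 1 - \epsilon $ while $ w_{2} c w_{1} \notin L_{twin} $ must be rejected with probability $ \ge 1 - \epsilon $.

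The main obstacle is setting up the signatures and the bilinear / Lipschitz estimate for a two-way PTM, since $ \mathcal{M} $ may cross the boundary between $ wc $ and $ w' $ many times; the naive one-way argument does not apply. The correct formalism is the Dwork-Stockmeyer return-probability matrix, whose dimension is polynomial in $ K(n) $ and whose entries depend on $ w $ in a way that is continuous enough that a small perturbation in $ \pi(w) $ yields only a small perturbation in the overall acceptance probability. Once that setup is in place, the counting step is routine; alternatively, one could simply invoke an off-the-shelf $ \Omega(\log n) $ space lower bound for a bounded-error probabilistic recognition of an equality-testing language (such as $ \{ww \mid w \in \{0,1\}^{*}\} $ due to Freivalds) and transfer it to $ L_{twin} $ by a trivial reduction.
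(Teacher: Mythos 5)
Your route is genuinely different from the paper's, and it contains a real gap. The paper disposes of this lemma with a two-line reduction: a PTM recognizing $L_{twin}$ in space $o(\log n)$ yields a PTM recognizing the palindrome language $L_{pal}$ in the same space, simply by reinterpreting the right end-marker as the symbol $c$ and bouncing the input head back whenever it tries to move past it, so that the machine behaves exactly as if its input were $w c w^{\rev}$; the known result of Freivalds and Karpinski that no $o(\log n)$-space PTM recognizes $L_{pal}$ with bounded error then finishes the job. You instead set out to re-prove the space lower bound from scratch by the Dwork--Stockmeyer signature-and-pigeonhole method. That method is indeed the engine behind the cited palindrome bound, so the strategy is viable in outline, but your sketch breaks exactly at the step you yourself flag as ``the main obstacle'': the asserted additive Lipschitz estimate $|f_{\mathcal{M}}(w_{1}cw') - f_{\mathcal{M}}(w_{2}cw')| \le C\,\|\pi(w_{1})-\pi(w_{2})\|$ with a fixed constant $C$ is false for two-way machines. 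The acceptance probability is an absorption probability of a Markov chain over the $K(n)=2^{O(s(n))}$ boundary configurations, and absorption probabilities are not Lipschitz in the transition probabilities with a dimension-independent constant; the correct continuity statement (the Dwork--Stockmeyer Markov-chain lemma) is multiplicative, with a modulus that degrades exponentially in the number of chain states. The pigeonhole count must then be redone with a precision $\eta$ that depends on $K(n)$ --- it still comes out to $2^{\mathrm{poly}(K(n))\log(1/\eta)} = 2^{n^{o(1)}} = o(2^{n})$, so the argument can be completed, but as written the key estimate is wrong and the genuinely hard part is deferred rather than done. One must also deal with computations that do not halt with probability 1, which the cited lower-bound proofs handle explicitly.

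Your fallback suggestion --- invoke an off-the-shelf bound for $\{ww \mid w\in\{0,1\}^{*}\}$ and ``transfer it to $L_{twin}$ by a trivial reduction'' --- has a direction problem: to run an $L_{twin}$-recognizer on an unmarked input $ww$ you would have to locate the middle of the input in order to insert the virtual $c$, and doing so already costs $\Omega(\log n)$ space. The paper's reduction succeeds precisely because it starts from $L_{pal}$, where the ``middle'' coincides with the physical right end of the tape and is detected for free. Your treatment of the PTM-WOM part (the computation-tree homomorphism of Lemma \ref{wom:lem:classical-wom} lifts to space-bounded PTMs) is fine and matches the paper's informal discussion.
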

\begin{proof}
       Any PTM using $ o(\log(n)) $ space to recognize $ L_{twin} $ with bounded error
       can be used to construct a PTM recognizing the palindrome language
       $ L_{pal} $ with bounded error using the same amount of space.
       (One would only need to modify the $ L_{twin} $ machine to treat the
right end-marker on the tape as
       the symbol $ c $, and switch its head direction when it attempts to
go past that symbol.)
       It is however known \citep{FK94} that no PTM using $ o(\log(n)) $
space can recognize $ L_{pal} $
       with bounded error.
\end{proof}

We are now able to state a stronger form of Corollary \ref{cor:QTM-WOMS-superior-PTM-WOMs},
which referred only to one-sided error:

\begin{corollary}
	\label{cor:QTM-WOMS-superior-PTM-WOMs-bounded-error}
       QTM-WOMs are strictly superior to PTM-WOMs for any space bound $ o(log(n)) $
       in terms of language recognition with bounded error.
\end{corollary}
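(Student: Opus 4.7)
The plan is to combine Theorem \ref{wom:thm:Ltwin-by-RT-QFA-POS} with Lemma \ref{wom:lem:no-ptm-wom-L-twin}, using the language $L_{twin}$ as the separating witness. The first step is to observe that any RT-QFA-POS is a QTM-WOM that uses zero cells on its read/write work tape and whose WOM head is restricted to the push-only directions; in particular, the machine of Theorem \ref{wom:thm:Ltwin-by-RT-QFA-POS} is already a QTM-WOM using $O(1)$, hence $o(\log n)$, work-tape space.

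The one subtlety is that Theorem \ref{wom:thm:Ltwin-by-RT-QFA-POS} delivers a negative one-sided error of $\frac{1}{2}$, which sits exactly at the boundary of the bounded-error definition given in the Preliminaries (which requires $\epsilon<\frac{1}{2}$). To push strictly inside, I would amplify by a standard parallel-repetition trick: build a new QTM-WOM $\mathcal{M}'$ whose internal Hilbert space is the $k$-fold tensor product of that of the RT-QFA-POS $\mathcal{M}$ of the theorem, whose WOM is multi-tracked via the product alphabet $\Gamma^{k}$ so that each copy writes on its own track in a single step, and whose final measurement accepts only when all $k$ copies would individually accept. Because $\mathcal{M}$ accepts members of $L_{twin}$ with probability $1$ and nonmembers with probability at most $\frac{1}{2}$, the amplified $\mathcal{M}'$ accepts members with probability $1$ and nonmembers with probability at most $2^{-k}$. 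Already $k=2$ yields error $\frac{1}{4}<\frac{1}{2}$, so $\mathcal{M}'$ recognizes $L_{twin}$ with bounded error while still using only constant work-tape space.

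Lemma \ref{wom:lem:no-ptm-wom-L-twin} in turn rules out any PTM-WOM using $o(\log n)$ space from recognizing $L_{twin}$ with bounded error, and the two facts together witness the claimed strict superiority. The main obstacle I expect, although essentially mechanical, is verifying that the tensoring-plus-multi-tracking amplification yields a well-formed quantum machine in the sense of the Preliminaries: one has to check that the tensor product of the superoperators arising from $\mathcal{M}$ is still a superoperator, and that encoding $k$ parallel push-only actions into a single WOM write on $\Gamma^{k}$ respects the head-movement and write-only constraints of the QTM-WOM model. Both amount to finite bookkeeping, since $\Gamma^{k}$ remains finite and tensor products of isometries are isometries.
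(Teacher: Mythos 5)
Your skeleton coincides with the paper's: the witness is $L_{twin}$, Theorem \ref{wom:thm:Ltwin-by-RT-QFA-POS} supplies the constant-space QTM-WOM, and Lemma \ref{wom:lem:no-ptm-wom-L-twin} excludes every $o(\log n)$-space PTM-WOM; the paper obtains the corollary from exactly this combination and adds no amplification step. Your worry about the error sitting at $\frac{1}{2}$ is legitimate --- the machine of that theorem accepts nonmembers of the form $w_{1}cw_{2}$ with $w_{1}\neq w_{2}$ with probability exactly $\frac{1}{2}$, so the two-sided definition in the Preliminaries is met only at its boundary, and the paper implicitly reads ``bounded error'' as the existence of a constant gap ($1$ versus $\leq\frac{1}{2}$).

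The amplification you propose to close that gap, however, fails. The $k$ copies are not dynamically independent, because they share one write-only head, and in the POS model that head moves precisely when a non-empty symbol is written. In a cross branch of the tensored superposition --- say copy $1$ following $\mathsf{path}_{1}$ (writing $w_{1}$, moving the head) while copy $2$ follows $\mathsf{path}_{2}$ (writing $\varepsilon$, keeping the head still) --- the two copies demand contradictory head actions, and however you resolve the conflict the two tracks end up recording $w_{1}$ and $w_{2}$ in disjoint cell ranges. Concretely, for $k=2$ on a member $wcw$, only the $(\mathsf{path}_{1},\mathsf{path}_{1})$ and $(\mathsf{path}_{2},\mathsf{path}_{2})$ branches share a WOM content and interfere; the two cross branches survive separately, and (with the natural pad-and-move resolution) the acceptance probability on members drops to $\frac{3}{8}$ while nonmembers are accepted with probability about $\frac{1}{4}$. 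So the claimed ``$1$ versus $2^{-k}$'' analysis is wrong; the fact that tensor products of isometries are isometries is true but beside the point, since the obstruction is the loss of perfect interference, not well-formedness. Genuine independence of the copies would require one separate WOM tape per copy, which is a different model (and is raised as an open problem in the paper). The corollary should therefore be read as the paper intends: the $1$-versus-$\leq\frac{1}{2}$ gap of the single machine is itself the bounded-error witness.
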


\begin{openproblem}
       Can a probabilistic pushdown automaton recognize $ L_{twin} $ with
bounded error?
\end{openproblem}

% SSSSSSSSSSSSSSSSSSSSSSSSSSSSSSSSSSSSSSSSSSSSSSSSSSSSSSSSSSSSSSSSSSSSSSSSSSSSSSSS %
% SSSSSSSSSSSSSSSSSSSSSSSSSSSSSSSSSSSSSSSSSSSSSSSSSSSSSSSSSSSSSSSSSSSSSSSSSSSSSSSS %
% SSSSSSSSSSSSSSSSSSSSSSSSSSSSSSSSSSSSSSSSSSSSSSSSSSSSSSSSSSSSSSSSSSSSSSSSSSSSSSSS %
\section*{Machines using two-way WOM tape}
% SSSSSSSSSSSSSSSSSSSSSSSSSSSSSSSSSSSSSSSSSSSSSSSSSSSSSSSSSSSSSSSSSSSSSSSSSSSSSSSS %
% SSSSSSSSSSSSSSSSSSSSSSSSSSSSSSSSSSSSSSSSSSSSSSSSSSSSSSSSSSSSSSSSSSSSSSSSSSSSSSSS %
% SSSSSSSSSSSSSSSSSSSSSSSSSSSSSSSSSSSSSSSSSSSSSSSSSSSSSSSSSSSSSSSSSSSSSSSSSSSSSSSS %

In this section, we present a bounded-error RT-QFA-WOM that recognizes a language for which we currently do not know a RT-QFA-POS algorithm, namely,
\begin{equation}
	L_{rev} = \{wcw^{\rev}\ \mid w \in \{a,b\}^{*} \},
\end{equation}
where $w^{\rev}$ is the reverse of string $w$.
Note that this language can also be recognized by a deterministic pushdown automaton.

\begin{theorem}
	There exists a RT-QFA-WOM that recognizes $ L_{rev} $
	with negative one-sided error bound $ \frac{1}{2} $.
\end{theorem}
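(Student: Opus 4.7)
The plan is to adapt the construction of Theorem \ref{wom:thm:Ltwin-by-RT-QFA-POS} by exploiting the fact that a two-way WOM tape (unlike a push-only stack) allows writing onto the same region of tape in two opposite spatial directions. Since $L_{rev}$ differs from $L_{twin}$ only in that the second half is reversed, it suffices to have one computational branch lay $w_1$ down on the tape going rightward while the other branch lays $w_2$ down going leftward, starting from the right endpoint of that same region; the two depositions then align cell-by-cell precisely when $w_2 = w_1^{\rev}$.

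Concretely, I would construct a RT-QFA-WOM $\mathcal{M}$ which, upon reading $\cent$, splits into two equal-amplitude branches $\mathsf{path}_1$ and $\mathsf{path}_2$. During the pre-$c$ phase (reading $w_1 = a_1 \cdots a_n$), $\mathsf{path}_1$ writes the current input symbol to the WOM and moves its head one square to the right, while $\mathsf{path}_2$ writes $\varepsilon$ and also moves right, so that the two WOM heads remain synchronized throughout, with only $\mathsf{path}_1$'s tape being modified. On reading the separator $c$, both paths write $\varepsilon$ and shift their heads one square to the left. During the post-$c$ phase (reading $w_2 = b_1 \cdots b_m$), the roles swap: $\mathsf{path}_1$ writes $\varepsilon$ and moves left, while $\mathsf{path}_2$ writes the current input symbol on the WOM and moves left. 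Malformed inputs (missing $c$, or a second $c$) are shunted to permanently rejecting states, exactly as in the $L_{twin}$ construction. Finally, on $\dollar$, the two branches undergo the two-way QFT from Figure \ref{wom:fiq:Ltwin}, with a common accepting state as the distinguished range element.

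By direct accounting, at the moment $\dollar$ is read both paths have their WOM heads at the same position, $\mathsf{path}_1$ has written $a_1, \ldots, a_n$ into cells $0, \ldots, n-1$, and $\mathsf{path}_2$ has written $b_m, \ldots, b_1$ into cells $n-m, \ldots, n-1$. The two configurations coincide exactly when $n = m$ and $a_i = b_{n+1-i}$ for every $i$, i.e., exactly when the input is a member of $L_{rev}$. In that case the QFT interferes the branches constructively on the accepting state and destructively on the rejecting state, so $\mathcal{M}$ accepts with probability $1$; otherwise the two branches end in distinguishable configurations that each independently contribute $\tfrac{1}{4}$ to the accept probability, for a total of at most $\tfrac{1}{2}$. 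This yields the claimed negative one-sided error bound of $\tfrac{1}{2}$.

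The main obstacle here is the head-movement bookkeeping: the two branches must occupy the same WOM-head position at every step, so that the only feature distinguishing their final configurations is the pattern left on the tape. That is why both paths must take the one-square-left step on the symbol $c$, and why the passive path in each phase must write $\varepsilon$ with an explicit direction rather than simply remain in place. Once the transition table is written out in the style of Figure \ref{wom:fiq:Ltwin}, verifying well-formedness and computing the acceptance probabilities is entirely parallel to the proof of Theorem \ref{wom:thm:Ltwin-by-RT-QFA-POS}, with the bidirectional WOM head as the only new ingredient.
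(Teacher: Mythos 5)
Your proposal is correct and follows essentially the same route as the paper's (sketched) proof: one branch writes $w_{1}$ left-to-right while the other writes $w_{2}$ right-to-left over the same region, and a final QFT makes the branches interfere exactly when the WOM configurations coincide, i.e.\ when $w_{2}=w_{1}^{\rev}$. Your explicit bookkeeping of the head positions (in particular the shared one-square-left move on $c$) just fills in a detail the paper leaves implicit.
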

\begin{proof}
	(sketch)
	We will use almost the same technique presented in the proof of Theorem \ref{wom:thm:Ltwin-by-RT-QFA-POS}.
	The computation is split into two paths ($ \mathsf{path}_{1} $ and $ \mathsf{path}_{2} $) 
	with equal amplitude at the beginning of the computation.
	Each path checks whether the input string is of the form $ w_{1}cw_{2} $, where $ w_{1},w_{2} \in \{a,b\}^{*} $
	and rejects with probability 1 if it is not.
	We assume that the input string is of the form $ w_{1}cw_{2} $ in the rest of this proof.
	Until the $ c $ is read, $ \mathsf{path}_{1} $ copies $ w_{1} $ to the WOM tape, and $ \mathsf{path_{2}} $
	just moves the WOM tape head one square to the right at each step, without writing anything.
	After reading the $ c $, the direction of the WOM tape head is reversed in both paths.
	That is, $ \mathsf{path}_{1} $ moves the WOM tape head one square to the left at each step, without writing anything, while $ \mathsf{path_{2}} $ writes $ w_{2} $ in the reverse direction (from the right to the left) 
	on the WOM tape.
	When the right end-marker is read,
	the paths make a QFT, as in the proof of Theorem \ref{wom:thm:Ltwin-by-RT-QFA-POS}.
	It is easy to see that the two paths interfere if and only if $ w_{1} = w_{2}^{\rev} $, and the input string is accepted with probability 1 if it is a member of $ L_{rev} $, and with probability $ \frac{1}{2} $ otherwise.

 \end{proof}

By an argument similar to the one used in the proof of Lemma \ref{wom:lem:no-ptm-wom-L-twin}, 
$ L_{rev} $ can not be recognized with bounded error by any PTM using $ o(\log(n)) $ space,
since the existence of any such machine would lead to a PTM that recognizes the palindrome
language using the same amount of space.

\begin{openproblem}
	Can a RT-QFA-POS recognize $ L_{rev} $ with bounded error?
\end{openproblem}

% SSSSSSSSSSSSSSSSSSSSSSSSSSSSSSSSSSSSSSSSSSSSSSSSSSSSSSSSSSSSSSSSSSSSSSSSSSSSSSSS %
% SSSSSSSSSSSSSSSSSSSSSSSSSSSSSSSSSSSSSSSSSSSSSSSSSSSSSSSSSSSSSSSSSSSSSSSSSSSSSSSS %
% SSSSSSSSSSSSSSSSSSSSSSSSSSSSSSSSSSSSSSSSSSSSSSSSSSSSSSSSSSSSSSSSSSSSSSSSSSSSSSSS %
\section*{Small amounts of WOM can be useful}
% SSSSSSSSSSSSSSSSSSSSSSSSSSSSSSSSSSSSSSSSSSSSSSSSSSSSSSSSSSSSSSSSSSSSSSSSSSSSSSSS %
% SSSSSSSSSSSSSSSSSSSSSSSSSSSSSSSSSSSSSSSSSSSSSSSSSSSSSSSSSSSSSSSSSSSSSSSSSSSSSSSS %
% SSSSSSSSSSSSSSSSSSSSSSSSSSSSSSSSSSSSSSSSSSSSSSSSSSSSSSSSSSSSSSSSSSSSSSSSSSSSSSSS %

It is easy to see that a WOM of constant size adds no power to a conventional machine. All the algorithms we considered until now used $ \Omega(n) $ squares of the WOM tape on worst-case inputs. What is the minimum amount of WOM that is required by a QFA-WOM recognizing a nonregular language? Somewhat less ambitiously, one can ask whether there is any nonregular language recognized by a RT-QFA-WOM with sublinear space. We answer this question positively for middle-space usage \citep{Sz94}, 
that is, when we are only concerned with the space used by the machine when the input is a member of the language.

Let $ (i)^{\rev}_{2} $ be the reverse of the binary representation of $ i \in \mathbb{N} $. Consider the language
\begin{equation}
	L_{rev-bins}=\{ a (0)_{2}^{\rev} a (1)_{2}^{\rev} a \cdots a (k)_{2}^{\rev} a \mid k \in \mathbb{Z}^{+} \}.
\end{equation}
\begin{theorem}
	$ L_{rev-bins} $ can be recognized by a RT-QFA-WOM $ \mathcal{M} $ with negative one-sided error bound 
	$ \frac{3}{4} $, and the WOM usage of $ \mathcal{M} $ for the members of $ L_{rev-bins} $ is $ O(\log n) $,
	where $ n $ is the length of the input string.
\end{theorem}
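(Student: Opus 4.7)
The plan is to use a quantum Fourier transform equality-check as in Theorem~\ref{wom:thm:Ltwin-by-RT-QFA-POS}, applied in a pipelined fashion to verify the successor relation $B_i = B_{i-1}+1$ for every adjacent pair of blocks, with the WOM used as a rolling buffer that is overwritten on each block so that for members only one block's bits ever occupy the tape at once, yielding the $O(\log n)$ space bound.

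More concretely, two paths $P_1$ and $P_2$ are forked at every $a$ separator with equal amplitude, tagged by the parity of the cycle so that concurrent cycle-pairs do not interfere. During the first of the two blocks spanned by a cycle, say $B_i$, $P_1$ copies $B_i$ verbatim onto the WOM while $P_2$ runs a constant-state reverse-binary increment transducer on the input (flipping a leading run of $1$s to $0$s, flipping the first $0$ to $1$, copying the rest, and emitting any carry-out bit during the following $a$), thus writing $B_i + 1$. During the cycle's second block $B_{i+1}$, $P_1$ overwrites its WOM with $B_{i+1}$ while $P_2$ holds its WOM by writing $\varepsilon$ with $\downarrow$. At the $a$ that closes the cycle, a two-way QFT with an accepting distinguished state (Figure~\ref{fig:wom:N-way-QFT}) is performed between $P_1$ and $P_2$: their WOM contents and head positions coincide exactly when $B_{i+1} = B_i + 1$, in which case the amplitude focuses on a single continuation state; otherwise it fractures, diverting at least one quarter of the cycle's amplitude into non-accepting branches. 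The initial condition $B_0 = 0$ is enforced by an auxiliary path forked at $\cent$ that rejects unless the first block is exactly the single symbol ``$0$''.

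The main obstacle is aligning the two paths' WOM head positions at the moment of each QFT. For members the block lengths $\ell_i$ are non-decreasing and increase by at most one between consecutive blocks (only at $i = 2^j$), and $P_1$ exploits this by alternating its write direction between left-to-right and right-to-left on successive blocks so that it never needs more than a one-step head-reset between blocks, while $P_2$'s transducer tracks the same rhythm; the single-unit head drift that occurs at power-of-two transitions is absorbed by scheduling the transducer's carry-out emission during the $a$ step, which provides exactly the one extra machine tick required for realignment. For members, every QFT succeeds, amplitude reconverges on accept with probability $1$, and the WOM stays within $O(\log n)$ cells; for non-members, at least one failed QFT caps the acceptance probability at $\frac{3}{4}$, giving the required negative one-sided error bound.
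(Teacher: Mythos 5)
Your toolkit is essentially the right one --- a two-path QFT equality test in which one path applies a reverse-binary increment transducer to a block while the other copies the neighbouring block, with the WOM region reused so that only $O(\log n)$ cells are ever touched on members --- and this matches the spirit of the paper's construction, which runs a modified version of the $L_{twin}$ machine of Theorem \ref{wom:thm:Ltwin-by-RT-QFA-POS} on consecutive blocks. The gap is in how you cover \emph{all} adjacent pairs. You fork a fresh $(P_1,P_2)$ cycle at every separator $a$, so the window for the pair $(B_i,B_{i+1})$ overlaps the window for $(B_{i+1},B_{i+2})$ in the block $B_{i+1}$. During that block a single computational path must then discharge two write obligations at once: as a member of cycle $i$ it must overwrite the buffer with $B_{i+1}$ (or hold), and as a member of cycle $i+1$ it must copy $B_{i+1}$ (or emit $B_{i+1}+1$) into the next buffer. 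The machine has one WOM head and writes one symbol per step, so these concurrent demands cannot both be met; the path playing the roles $(P_1^{i},P_2^{i+1})$, for instance, would have to write $B_{i+1}$ and $B_{i+1}+1$ simultaneously. Your parity tag only keeps the \emph{state components} of concurrent cycles from interfering; it does nothing about the single shared write head. The paper avoids this entirely by forking only once, at $\cent$, into two branches that never interfere, with $\mathsf{branch}_{1}$ sequentially checking the disjoint pairs $(B_0,B_1),(B_2,B_3),\dots$ and $\mathsf{branch}_{2}$ the pairs $(B_1,B_2),(B_3,B_4),\dots$; within a branch the two-block windows never overlap, each iteration simply reverses the direction of the WOM head so the same cells are reused, and a failed check in either branch (each of weight $\frac{1}{2}$) diverts half of that branch's probability to rejection, giving acceptance at most $\frac{3}{4}$ on non-members.

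A secondary, fixable complication: having $P_1$ first copy $B_i$ and then overwrite it with $B_{i+1}$ is what forces the head-realignment gymnastics you describe (alternating write directions within a single cycle, absorbing drift on the $a$ step). This is unnecessary. As in the $L_{twin}$ construction, it suffices for one path to write only the incremented first block (the transducer's output on $B_i$) and for the other to write only the verbatim second block $B_{i+1}$, each holding while the other writes; when the check succeeds both heads end at the same offset and the QFT interferes cleanly, and the only delicate point left is the one-symbol carry-out, which, as you note, can be emitted on the $a$ step.
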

\begin{proof}
	It is not hard to modify the RT-QFA-POS recognizing $ L_{twin} $ to obtain a new RT-QFA-POS, say $ \mathcal{M}' $, 
	in order to recognize language
	$ L_{twin'}=\{ (i)^{\rev}_{2} a (i+1)_{2}^{\rev} \mid i \geq 0 \} $ 
	with negative one-sided error bound $ \frac{1}{2} $.
	Our construction of $ \mathcal{M} $ will be based on $ \mathcal{M'} $.
	The main idea is to use $ \mathcal{M}' $ in a loop in order to check the consecutive
	blocks of $ \{0,1\}^{+}a\{0,1\}^{+} $ between two $ a $'s. In each iteration,
	the WOM tape head reverses direction, and so the previously used space can be used again and again.
	Note that, whenever $ \mathcal{M}' $ executes a rejecting transition, $ \mathcal{M} $ enters a path which will reject the input when it arrives at the right end-marker, and whenever $ \mathcal{M}' $ is supposed to execute an accepting transition 
	(except at the end of the computation), $ \mathcal{M} $ enters the next iteration.
	At the end of the input, the input is accepted by $ \mathcal{M} $
	if $ \mathcal{M}' $ accepts in its last iteration.
	
	Let $ w  $ be an input string.
	We assume that $ w $ is of the form
	\begin{equation}
		a \{0,1\}^{+} a \{0,1\}^{+} a \cdots a \{0,1\}^{+}a .
	\end{equation} 
	(Otherwise, it is rejected with probability 1.)
	At the beginning, the computation is split equiprobably into two branches,
	$ \mathsf{branch}_{1} $ and $ \mathsf{branch}_{2} $.
	(These will never interfere with each other.)
	$ \mathsf{branch}_{1} $ (resp., $ \mathsf{branch}_{2} $) enters the block-checking loop after reading the
	first (resp., the second) $ a $. Thus, at the end of the computation, 
	one of the branches is in the middle of an iteration, and the other one has just finished its final iteration.
	The branch whose iteration is interrupted by reading the end-marker accepts with probability 1.
	
	If $ w \in L_{rev-bins} $, neither branch enters a reject state, and the input is accepted with probability 1. 
	On the other hand, if $ w \notin L_{rev-bins}  $, there must be at least one block $ \{0,1\}^{+}a\{0,1\}^{+} $
	that is not a member of $ L_{twin'} $, and so the input is rejected with probability $ \frac{1}{2} $
	in one branch. 
	Therefore, the overall accepting probability can be at most $ \frac{3}{4} $.

It is easy to see that the WOM usage of this algorithm for members of $ L_{rev-bins} $ is $ O(\log n) $.
 \end{proof}

% SSSSSSSSSSSSSSSSSSSSSSSSSSSSSSSSSSSSSSSSSSSSSSSSSSSSSSSSSSSSSSSSSSSSSSSSSSSSSSSS %
% SSSSSSSSSSSSSSSSSSSSSSSSSSSSSSSSSSSSSSSSSSSSSSSSSSSSSSSSSSSSSSSSSSSSSSSSSSSSSSSS %
% SSSSSSSSSSSSSSSSSSSSSSSSSSSSSSSSSSSSSSSSSSSSSSSSSSSSSSSSSSSSSSSSSSSSSSSSSSSSSSSS %
\section*{Conclusion}
% SSSSSSSSSSSSSSSSSSSSSSSSSSSSSSSSSSSSSSSSSSSSSSSSSSSSSSSSSSSSSSSSSSSSSSSSSSSSSSSS %
% SSSSSSSSSSSSSSSSSSSSSSSSSSSSSSSSSSSSSSSSSSSSSSSSSSSSSSSSSSSSSSSSSSSSSSSSSSSSSSSS %
% SSSSSSSSSSSSSSSSSSSSSSSSSSSSSSSSSSSSSSSSSSSSSSSSSSSSSSSSSSSSSSSSSSSSSSSSSSSSSSSS %

In this paper, we showed that write-only memory devices can increase
the computational power of quantum computers. We considered quantum
finite automata augmented with WOMs, and  demonstrated several example
languages which are known to be unrecognizable by conventional
quantum computers with certain restrictions, but are recognizable by a
quantum computer employing a WOM under the same restrictions. The
QFA-WOM models under consideration were also shown to be able to
simulate certain classical machines that employ linear amounts of memory, and
are therefore much more powerful than finite automata. We also showed
that merely logarithmic amounts of WOM can be useful in the sense of
enabling the recognition of nonregular languages.

A close examination of our algorithms reveals that quantum computers using WOM are able to avoid 
the argument in the proof of Lemma \ref{wom:lem:classical-wom} thanks to 
their use of negative (and complex) transition amplitudes in the QFT, 
which enables two configurations with the same WOM value to cancel each other altogether, 
when they have suitable amplitudes.

If one changes the RT-QFA-POS model so that the POS is now an output tape, 
the machine described in Theorem \ref{wom:thm:Ltwin-by-RT-QFA-POS} 
becomes a realtime quantum finite state transducer (RT-QFST) 
\citep{FW01} computing the function \citep{SY10A}
\begin{equation} 
	f(x) = \left\lbrace 
		\begin{array}{ll}
			w, & \mbox{ if } x=wcw, \mbox{ where } w \in \{a,b\}^{*} \\
			\mbox{undefined}, & \mbox{ otherwise}
		\end{array},
	\right.
\end{equation}
with bounded error.
The arguments leading to Corollary \ref{cor:QTM-WOMS-superior-PTM-WOMs-bounded-error} can then
be rephrased in a straightforward way to show that conventional QTMs 
are strictly superior to PTMs in function computation for any common
space bound that is $ o(\log(n)) $.

Finally, assume that we make another change to the RT-QFST described
in the paragraph above, so that it prints the symbol $c$ when it is
about to accept. The resulting constant-space QTM is easily seen to be
computing a reduction from $ L_{twin} $ to the language $L_{1} = \{
\{a,b\}^{*}c \} $ with bounded error. But no PTM $\mathcal{P}$ using $
o(\log n) $ space can compute this reduction, since we could otherwise
build a PTM for deciding $ L_{twin} $ with the same error bound by
composing $\mathcal{P}$ with the finite automaton recognizing $L_{1}$.
The detailed examination of how and to what extent quantum reductions
outperform probabilistic reductions with common restrictions is an
interesting topic.

Note that it is already known that adding a WOM to a reversible
classical computer may increase its computational power, since
it enables one to embed irreversible tasks into
``larger" reversible tasks by using the WOM as a trashcan.
As a simple example, reversible finite automata (RFAs) can recognize
a proper subset of regular languages \citep{Pi87}, but RFA's with WOM
can recognize exactly the
regular languages, and nothing more.
In the quantum case, WOM can also have a similar effect.
For example, the computational power of the most restricted type of
quantum finite automata (MCQFAs) \citep{MC00}
is equal to RFAs, but it has been shown \citep{Pa00,Ci01} that
MCQFAs with WOM can recognize all and only the regular languages,
attaining the power of the most general quantum finite automata (QFA)
without WOM. In all these examples, the addition of WOM to a
specifically weak model raises it to the level of the most general
classical (deterministic) automaton.
On the other hand, in this work, we show that adding WOM to the most
general type of QFA results in a much more powerful model that can
achieve a task that is impossible for all sublogarithmic space PTMs.

Some remaining open problems related to this study can be listed as follows:
\begin{enumerate}
      \item Does a WOM add any power to quantum computers which are allowed to
              operate at logarithmic or even greater space bounds?
      \item How would having several separate WOMs, each of which would
              contain different strings, affect the performance?
      \item Is there a nontrivial lower bound to the amount of WOM
that is useful for the recognition of nonregular languages by
QFA-WOMs?
\end{enumerate}

% SSSSSSSSSSSSSSSSSSSSSSSSSSSSSSSSSSSSSSSSSSSSSSSSSSSSSSSSSSSSSSSSSSSSSSSSSSSSSSSS %
% SSSSSSSSSSSSSSSSSSSSSSSSSSSSSSSSSSSSSSSSSSSSSSSSSSSSSSSSSSSSSSSSSSSSSSSSSSSSSSSS %
% SSSSSSSSSSSSSSSSSSSSSSSSSSSSSSSSSSSSSSSSSSSSSSSSSSSSSSSSSSSSSSSSSSSSSSSSSSSSSSSS %
\section*{Acknowledgements}
% SSSSSSSSSSSSSSSSSSSSSSSSSSSSSSSSSSSSSSSSSSSSSSSSSSSSSSSSSSSSSSSSSSSSSSSSSSSSSSSS %
% SSSSSSSSSSSSSSSSSSSSSSSSSSSSSSSSSSSSSSSSSSSSSSSSSSSSSSSSSSSSSSSSSSSSSSSSSSSSSSSS %
% SSSSSSSSSSSSSSSSSSSSSSSSSSSSSSSSSSSSSSSSSSSSSSSSSSSSSSSSSSSSSSSSSSSSSSSSSSSSSSSS %

Yakary{\i}lmaz and Say were partially supported by the Scientific and Technological Research Council of
Turkey (T\"{U}B\.ITAK) with grant 108142. 
Freivalds and Agadzanyan  were partially supported by Grant No. 09.1570 from the
Latvian Council of Science and by Project 2009/0216/1DP/1.1.2.1.2/09/IPIA
/VIA/004 from the European Social Fund.

\theendnotes

\bibliographystyle{abbrvnat}
\bibliography{YakaryilmazSay}

\end{document}